\newcounter{prgline}
\newcounter{noqed}
\newcommand{\qed}{ \ifmmode\mbox{
}\fi\rule[-.05em]{.3em}{.7em}\setcounter{noqed}{0}}
\newenvironment{proof}[1][{}]{\noindent{\bf Proof#1.
}\setcounter{noqed}{1}}{\ifnum\value{noqed}=1\qed\fi\par\medskip}
\def\..{\,\mathpunct{\ldotp\ldotp}} 
\newcommand{\A}{\mathscr A}
\newcommand{\url}{\cite{myurl}}
\newtheorem{lemma}{Lemma} 
\newtheorem{theorem}{Theorem}
\newtheorem{corollary}{Corollary}
\renewcommand{\emptyset}{\varnothing}
\renewcommand{\epsilon}{\varepsilon}
\renewcommand{\phi}{\varphi}
\begin{document} 
\title{Worst case efficient single and multiple string matching in the Word-RAM model
\thanks{This work is supported by the french ANR project MAPPI.}
\thanks{A preliminary version of this paper was presented at the 21st International Workshop on Combinatorial Algorithms(IWOCA)  London 2010}
}   

\author{Djamal Belazzougui}
\affil{LIAFA, Univ. Paris Diderot - Paris 7, 75205 Paris Cedex 13, France dbelaz@liafa.jussieu.fr}
\bibliographystyle{abbrv}
\maketitle
\begin{abstract}

In this paper, we explore worst-case solutions for the problems of single and multiple matching on strings in the word RAM model with word length $w$. 
In the first problem, we have to build a data structure based on a pattern $p$ of length $m$ over an alphabet of size $\sigma$ such that we can answer to the following query: given a text $T$ of length $n$, where each character is encoded using $\log\sigma$ bits return the positions of all the occurrences of $p$ in $T$ (in the following we refer by $occ$ to the number of reported occurrences). For the multi-pattern matching problem we have a set $S$ of $d$ patterns of total length $m$ and a query on a text $T$ consists in finding all positions of all occurrences in $T$ of the patterns in $S$. As each character of the text is encoded using $\log\sigma$ bits and we can read $w$ bits in constant time in the RAM model, we assume that we can read up to $\Theta(w/\log\sigma)$ consecutive characters of the text in one time step. This implies that the fastest possible query time for both problems is $O(n\frac{\log\sigma}{w}+occ)$. In this paper we present several different results for both problems which come close to that best possible query time. 
We first present two different linear space data structures for the first and second problem: the first one answers to single pattern matching queries in time $O(n(\frac{1}{m}+\frac{\log\sigma}{w})+occ)$ while the second one answers to multiple pattern matching queries to $O(n(\frac{\log d+\log y+\log\log m}{y}+\frac{\log \sigma}{w})+occ)$ where $y$ is the length of the shortest pattern. We then show how a simple application of the four russian technique permits to get data structures with query times independent of the length of the shortest pattern (the length of the only pattern in case of single string matching) at the expense of using more space.


\end{abstract}
\section{Introduction}
The problems of string pattern matching and multiple string pattern matching are classical algorithmic problems in the area of pattern matching. In the multiple string matching problem, we have to preprocess a dictionary of $d$ strings of total length $m$ characters over an alphabet of size $\sigma$ so that we can answer to the following query: given any text of length $n$, find all occurrences in the text of any of the $d$ strings. In the case of single string matching, we simply have $d=1$.
\\The textbook solutions for the two problems are the Knuth-Morris-Pratt~\cite{KMP77} (KMP for short) automaton for the single string matching problem and the Aho-Corasick~\cite{AC75} automaton (AC for short) for the multiple string matching problem. The AC automaton is actually a generalization of the KMP automaton. 
Both algorithms achieve $O(n+occ)$ query time (where $occ$ denotes the number of reported occurrences) using $O(m\log m)$ bits of space\footnote{In this paper we quantify the space usage in bits rather than in words as is usual in other papers} (both automatons are encoded using $O(m)$ pointers occupying $\log m$ bits each). The query time of both algorithms is in fact optimal if the matching is restricted to read all the characters of the text one by one. However as it was noticed in may previous works, in many cases it is actually possible to avoid reading all the characters of the text and hence achieve a better performance. This stems from the fact that by reading some characters at certain positions in the text, one could conclude whether a match is possible or not without the need to read all the characters. This has led to various algorithms with so-called sublinear query time assuming that the characters of the patterns and/or the text are drawn from some random distribution. 
The first algorithm which exploited that fact was the Boyer-Moore algorithm~\cite{BM77}. Subsequently other algorithms with provably average-optimal performance were devised. Most  notably the BDM and BNDM for single string matching and the multi-BDM~\cite{CR94,CCGJLPR94} and multi-BNDM ~\cite{NR98} for multiple string matching. Those algorithms achieve $O(n\frac{\log m}{m\log\sigma}+occ)$ time for single string matching (which is optimal according to the lower bound in~\cite{YAO79}) and $O(n\frac{\log d+\log y}{y\log\sigma}+occ)$ time for multiple string matching, where $y$ is the length of the shortest string in the set. 
Still in the worst case those algorithms may have to read all the text characters and thus have $\Omega(n+occ)$ query time (actually many of those algorithms have an even worse query time in the worst-case, namely $\Omega(nm+occ)$).  
\\A general trend has appeared in the last two decades when many papers have appeared trying to exploit the power of the word RAM model to speed-up and/or reduce the space requirement of classical algorithms and data structures. In this model, the computer operates on words of length $w$ and usual arithmetic and logic operations on the words all take one unit of time. 
\\
In this paper we focus on the worst-case bounds in the RAM model with word length $w$. That is we try to improve on the KMP and AC in the RAM model assuming that we have to read all the characters of the text which are assumed to be stored in a contiguous area in memory using $\log\sigma$ bits per characters. 
That means that it is possible to read $\Theta(w/\log\sigma)$ consecutive characters of the text in $O(1)$ time. Thus given a text of length $n$ characters, an optimal algorithm should spend $O(n\frac{\log\sigma}{w}+occ)$ time to report all the occurrence of matching patterns in the text. The main result of this paper is a worst case efficient algorithm whose performance is essentially the addition of a term similar to the average optimal time presented above
plus the time necessary to read all the characters of the text in the RAM model. Unlike many other papers, we only assume that $w=\Omega(\log(n+m))$, and not necessarily that $w=\Theta(\log(n+m))$. That is we only assume that a pointer to the manipulated data (the text and the patterns), fit in a memory word but the word length $w$ can be arbitrarily larger than $\log m$ or $\log n$. This assumption makes it possible to state time bounds which are independent of $m$ and $n$, implying larger speedups for small values of $m$ and $n$. 
\\
In his paper Fredriksson presents a general approach~\cite{F02} which can be applied to speed-up many pattern matching algorithms. This approach which is based on the notion of super-alphabet relies on the use of tabulation (four russian technique). If this approach is applied to our problems of single and multiple string matching queries, given an available precomputed space $t$, we can get a $\log_\sigma (t/m)$ factor speedup. 
In his paper~\cite{B09}, Bille presented a more space efficient method for single string matching queries which accelerates the KMP algorithm to answer to queries in time $O(\frac{n}{\log_\sigma n}+occ)$ using $O(n^{\epsilon}+m\log m)$ bits of space for any constant $\epsilon$ such that $0<\epsilon<1$. 
More generally, the algorithm can be tuned to use an additional amount $t$ of tabulation space in order to provide a $\log_\sigma t$ factor speedup. 
\\At the end of his paper, Bille asked two questions: the first one was whether it is possible to get an acceleration proportional to the machine word length $w$ (instead of $\log n$ or $\log t$) using linear space only. The second one was whether it is possible to obtain similar results for the multiple string matching problem. We give partial answers to both questions. Namely, we prove the following two results:
\begin{enumerate}
\item Our first result states that for $d$ strings of minimal length $y$, we can construct an index which occupies linear space and answers to queries in time $O(n(\frac{\log d+\log y+\log\log m}{y}+\frac{\log\sigma}{w})+occ)$. This result implies that we can get a speedup factor $\frac{w}{(\log d+\log w)\log\sigma}$ if $y\geq \frac{w}{\log\sigma}$ and get the optimal speedup factor $\frac{w}{\log\sigma}$ if $y\geq (\log d+\log w)\frac{w}{\log\sigma}$. 
\item Our second result implies that for $d$ patterns of arbitrary lengths and an additional $t$ bits of memory, we can obtain a factor $\frac{\log_\sigma t}{\log d+\log\log_\sigma t+\log\log m}$ speedup using $O(m\log m+t)$ bits of memory.
\end{enumerate}
Our first result compares favorably to Bille's and Fredriksson approaches as it does not use any additional tabulation space. In order to obtain any significant speedup, the algorithms of Bille and Fredriksson require a substantial amount of space $t$ which is not guaranteed to be available. Even if such an amount of space was available, the algorithm could run much slower in case $m\ll t$ as modern hardware is made of memory hierarchies, where random access to large tables which do not fit in the fast levels of the hierarchy might be much slower than access to small data which fit in faster levels of the hierarchy. 
\\ 
Our second result is useful in case the shortest string is very short and thus, the first result do not provide any speedup. The result is slightly less efficient than that of Bille for single string matching, being a factor $\log\log_\sigma t+\log\log m$ slower (compared to the $\log_\sigma t$ speedup of Bille's algorithm). However, our second result efficiently extends to multiple string matching queries, while Bille's algorithms seems not to be easily extensible to multiple string matching queries.
\\
The third and fourth results in this paper are concerned with single string matching, where we can have solutions with a better query time than what can be obtained by using the first and second result for matching a single pattern. In particular our results imply the following:
\begin{enumerate}
\item Given a single pattern $p$ of length $m$, we can construct a data structure which occupies a linear space and which can find all $occ$ occurrences of $p$ in any text of length $n$ in time $O(n(\frac{1}{m}+\frac{\log\sigma}{w})+occ)$. This implies that we can get optimal query time  $O(n\frac{\log\sigma}{w}+occ)$ as long as $m\geq \frac{w}{\log\sigma}$.
\item For a single string of length $m$ and having some additional $t$ bits of space, we can build a data structure which occupies $O(m\log m+t)$ bits of space such that all the $occ$ occurrences of $p$ in any text of length $n$ are reported in time $O(n/\log_\sigma t+occ)$
\end{enumerate}

In a recent work~\cite{B10a}, we have tried to use the power of the RAM model to improve the space used by the AC representation to the optimal (up to a constant factor) $O(m\log\sigma)$ bits instead of $O(m\log m)$ bits of the original representation, while maintaining the same query time. In this paper, we attempt to do the converse. That is, we try to use the power of the RAM model to improve the query time of the AC automaton while using the same space as the original representation. 
\\
We emphasize that our results are mostly theoretical in nature. The constants in space usage and query time of our data structures seem rather large. Moreover, in practice average efficient algorithms which have been tuned for years are likely to behave much better than any worst-case efficient algorithm. For example, for DNA matching, it was noted that DNA sequences encountered in practice are rather random and hence average-efficient algorithms tend to perform extremely well for matching in DNA sequences (see ~\cite{RSKKT09} for example). 
\section{Outline of the results}
\subsection{Problem definition, notation and preliminaries}
In this paper, we aim at addressing two problems: the single string pattern matching and the multiple string pattern matching problems. In the single string pattern matching problem we have to build a data structure on a single pattern (string) of length $m$ over an alphabet of size $\sigma\leq m$~\footnote{Our results also apply to the $\sigma\geq m$. The only change is in space bounds in which the term $m\log m$ should be replaced by $m\log\sigma$}. In the multiple string pattern matching problem, we have a set $S$ of $d$ patterns of total length $m$ characters where each character is drawn from an alphabet of size $\sigma\leq m$. In the first problem, we have to identify all occurrences of the pattern in a text $T$ of length $n$. In the second problem, we have to identify all occurrences of any of the $d$ patterns. 
\\
In this paper, we assume a unit-cost RAM model with word length $w$, and assume that $w=\Omega(\log m+\log n)$. However $w$ could be arbitrarily larger than $\log m$ or $\log n$. We assume that the patterns and the text are drawn from the same alphabet $\Sigma$ of size $\sigma\leq m$. We assume that all usual RAM operations (multiplications, additions, divisions, shifts, etc...) take one unit of time. 
\\For any string $x$ we denote by $x[i,j]$ (or $x[i..j]$) the substring of $x$ which begins at position $i$ and ends at position $j$ in the string $x$. 
For any integer $m$ we note by $\log m$ the integer number $\lceil \log_2 m\rceil$.
\\
In the paper we make use of two kinds of ordering on the strings: the prefix lexicographic order which is the standard lexicographic ordering (strings are compared right-to-left) and the suffix-lexicographic order which is defined in the same way as prefix lexicographic, but in which string are compared left-to-right instead of right-to-left. The second ordering can be thought as if we write the strings in reverse before comparing them. Unless otherwise stated, string lengths are expressed in terms of number of characters. We make use of the fixed integer bit concatenation operator $(\cdot)$ which operates on fixed length integers, where $z=x\cdot y$ means that $z$ is the integer whose bit representation consists in the concatenation of the bits of the integers $x$ as most significant bits followed by the bits of the integer $y$ as least significant bits. 
We define the function $sucount_X(s)$, which returns the number of elements of a set $X$ which have a string $s$ as a suffix. Likewise we define the function $prcount_X(s)$, which returns the number of elements of a set $X$ which have a string $s$ as a prefix. 
We also define two other functions $surank_X(s)$ and $prrank_X(s)$ as the functions which return the number of elements of a set $X$ which precede the string $s$ in suffix and prefix lexicographic orders respectively.

\subsection{Results} 
The results of this paper are summarized by the following two theorems: 
\begin{theorem}
\label{theorem1}
Given a set $S$ of $d$ strings of total length $m$, where the shortest string is of length $y$, we can build a data structure of size $O(m\log m)$ bits such that given any text $T$ of length $n$, we can find all occurrences of strings of $S$ in $T$ in time $O(n(\frac{\log d+\log y+\log\log m}{y}+\frac{\log\sigma}{w})+occ)$.
\end{theorem}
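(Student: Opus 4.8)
The plan is to build a \emph{lossless sampling filter} keyed on the shortest length $y$, and to reduce the whole query to (i) a sequence of word-packed block reads, (ii) dictionary look-ups of those blocks, and (iii) verification of the few surviving candidates. I would fix a sampling length $\ell=\Theta(w/\log\sigma)$, so that any block of $\ell$ text characters fits in $O(1)$ machine words and can therefore be read and hashed in $O(1)$ time, and I would sample the text positions with stride $q=y-\ell+1$. The correctness invariant is a covering lemma: since every pattern has length at least $y$, any occurrence of a pattern at text position $t$ contains an aligned sampled block $T[p\..p+\ell-1]$ with $p\in[t,\,t+y-\ell]$ (such a $p$ exists because that interval holds exactly $q$ consecutive positions), and that block equals a length-$\ell$ factor of the matched pattern at a \emph{known} offset $p-t$. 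Hence scanning only the sampled blocks cannot miss an occurrence, and when $y\geq 2\ell$ there are only $O(n/y)$ samples, which is exactly where the $1/y$ in the bound comes from. When $y<2\ell$ I would fall back to reading the whole text at word speed (cost $O(n\log\sigma/w)$) and running a word-RAM-accelerated Aho-Corasick pass, since in that regime the claimed per-character budget $\tfrac{\log d+\log y+\log\log m}{y}$ already exceeds $\tfrac{\log\sigma}{w}$.

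For preprocessing I would store, for the set of all length-$\ell$ factors of the patterns (at most $m$ of them), a \emph{deterministic dictionary} mapping a block to a pointer into the patterns; representing each key and value by a pointer rather than by its characters keeps the space at $O(m\log m)$ bits, as required. A query then reads a sampled block in $O(1)$ time and looks it up; using a deterministic word-RAM dictionary the look-up costs $O(\log\log m)$ in the worst case, which accounts for that term in the bound. A positive look-up yields the list of (pattern, offset) pairs whose length-$\ell$ factor equals the block; each such pair proposes a candidate occurrence starting at $p-\mathrm{offset}$, which must be confirmed by a full comparison against the corresponding pattern. To verify quickly I would keep a generalized suffix tree (equivalently, the Aho-Corasick automaton) of the patterns together with longest-common-extension information, so that extending or rejecting a candidate proceeds by comparing $\Theta(w/\log\sigma)$ characters per step rather than one.

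The hard part is obtaining a \emph{worst-case}, rather than average-case, guarantee on the verification, because a single sampled block can be a factor of as many as $d$ patterns, at up to $\Theta(y)$ offsets each, so the naive candidate set per sample is far too large to examine one by one. The crux of the argument is therefore to show that all candidates generated at one sample can be resolved in $O(\log d+\log y+\log\log m)$ time amortized \emph{plus} $O(1)$ per genuine reported occurrence; I would do this by organizing the (pattern, offset) pairs that share a factor so that their verifications share a common extension and collapse into a single search in the suffix tree, charging the branching to the $\log d+\log y$ term and the navigation to $\log\log m$. Finally one checks that the total amount of text actually touched, namely the sampled blocks plus the verification reads, never exceeds $O(n\log\sigma/w)$ word accesses, so that term safely absorbs all reading, and that the candidate bookkeeping stays within $O(m\log m)$ bits. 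Correctness then follows from the covering lemma, while the running time is the sum of the $O(n/y)$ sampling look-ups, the amortized verification charge, the global reading term $O(n\log\sigma/w)$, and the $O(occ)$ output cost.
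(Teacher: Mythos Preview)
Your proposal has a genuine gap at exactly the point you flag as ``the hard part.'' You correctly observe that a single sampled $\ell$-block can be a factor of up to $d$ patterns at up to $\Theta(y)$ offsets each, so the naive candidate set per sample has size up to $\Theta(dy)$. Your remedy---``organizing the (pattern, offset) pairs that share a factor so that their verifications share a common extension and collapse into a single search in the suffix tree''---is not an argument, it is a hope. Two patterns sharing the same length-$\ell$ factor need not share \emph{any} extension on either side (take $p_1=\alpha X\beta$ and $p_2=\gamma X\delta$ with $\alpha,\beta,\gamma,\delta$ pairwise unrelated), so there is nothing for a suffix-tree search to collapse. Moreover, candidates at different offsets $o$ propose different starting positions $p-o$ in $T$, so even the text side of the comparison is not shared. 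You have essentially rediscovered why sampling/filtering algorithms of the BDM family are \emph{average}-optimal but have $\Omega(nm)$ worst case: the filter is lossless, but the verification cost is not controlled. Nothing in your sketch bounds verification by $O(\log d+\log y+\log\log m)$ per sample in the worst case.

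The paper avoids this trap by a completely different decomposition: it never verifies candidates at all. Instead it simulates the Aho--Corasick automaton in blocks of $b=y$ characters, maintaining at each step the exact AC state (i.e., the longest element of the prefix set $P$ that is a suffix of the text read so far). Two subproblems are solved per block. First, jumping from state $t_i$ to $t_{i+b}$ is reduced to a \emph{1D stabbing} query over intervals keyed by (hash of the $b$-block) $\cdot$ (suffix-lexicographic rank of the state), answered via a y-fast trie in $O(\log\log m)$ plus a String-B-tree lookup in $O(\log d+\log b)$; this gives $O(\log d+\log y+\log\log m + y\log\sigma/w)$ per transition. Second, reporting all occurrences that end inside the current block is reduced to a \emph{2D stabbing} (planar point enclosure) query: each pattern $s$ contributes $b$ rectangles, one per split $s=p'q'$ with $|q'|\le b$, whose $x$-range is the suffix-rank interval of $p'$ in $P$ and whose $y$-range is the prefix-rank interval of $q'$ among short suffixes; querying at the point $(\text{current state},\,\text{rank of the block})$ returns precisely the occurrences, in $O(\log(db)+occ_i)$ time via Chazelle's structure. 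The AC state is doing the work your ``shared extension'' hand-wave was supposed to do: it already encodes the entire relevant left context, so there is no per-candidate verification at all. That reduction to geometric stabbing is the idea you are missing.
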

The theorem give us the following interesting corollaries:
For multiple string matching, we have the following two corollaries:
\begin{corollary}
\label{corollary2}
Given a set $S$ of $d$ strings of total length $m$ where each string is of length at least $\frac{w}{\log\sigma}$ characters, we can build a data structure of size $O(m\log m)$ bits of space such that given any text $T$ of length $n$, we can find all occurrences of strings of $S$ in $T$ in time $O(n\frac{(\log d+\log w)\log\sigma}{w}+occ)$.
\end{corollary}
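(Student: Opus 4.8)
The plan is to obtain Corollary~\ref{corollary2} as a direct specialization of Theorem~\ref{theorem1} to the case $y \geq \frac{w}{\log\sigma}$; no new construction is needed, since the data structure and its $O(m\log m)$-bit space bound are inherited verbatim. All the work lies in simplifying the query-time expression $O\!\left(n\big(\frac{\log d+\log y+\log\log m}{y}+\frac{\log\sigma}{w}\big)+occ\right)$ under this lower bound on $y$.

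First I would rewrite the hypothesis $y \geq \frac{w}{\log\sigma}$ as $\frac{1}{y} \leq \frac{\log\sigma}{w}$ and then treat the numerator $\log d + \log y + \log\log m$ one summand at a time. The summand $\log d$ contributes $\frac{\log d}{y} \leq \frac{\log d\,\log\sigma}{w}$, which already matches the target form. The summand $\log\log m$ is handled with the standing assumption $w=\Omega(\log m)$: it gives $\log m = O(w)$, hence $\log\log m = O(\log w)$, so that $\frac{\log\log m}{y} = O\!\left(\frac{\log w\,\log\sigma}{w}\right)$.

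The one summand requiring care --- and the only genuine obstacle --- is $\frac{\log y}{y}$, since here a larger $y$ both shrinks $\frac{1}{y}$ and enlarges $\log y$, so the naive substitution $\frac{1}{y}\leq\frac{\log\sigma}{w}$ is not by itself conclusive. I would resolve this by monotonicity: the function $y\mapsto \frac{\log y}{y}$ is eventually decreasing, so in the regime of interest where $\frac{w}{\log\sigma}$ is large, the hypothesis $y\geq \frac{w}{\log\sigma}$ yields $\frac{\log y}{y} \leq \frac{\log(w/\log\sigma)}{w/\log\sigma} = \frac{\log(w/\log\sigma)\,\log\sigma}{w} = O\!\left(\frac{\log w\,\log\sigma}{w}\right)$, using $\log(w/\log\sigma)\leq\log w$.

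Finally I would collect the three bounds, observe that the leftover term $\frac{\log\sigma}{w}$ from the theorem is dominated by $\frac{(\log d+\log w)\log\sigma}{w}$ because $\log d+\log w\geq 1$, and conclude that the entire first factor is $O\!\left(\frac{(\log d+\log w)\log\sigma}{w}\right)$. Substituting this back into the Theorem~\ref{theorem1} bound gives the claimed query time $O\!\left(n\frac{(\log d+\log w)\log\sigma}{w}+occ\right)$, with the space bound unchanged.
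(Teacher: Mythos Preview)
Your derivation is correct and follows essentially the same route as the paper: both obtain the corollary directly from Theorem~\ref{theorem1} by substituting the lower bound on the shortest pattern length and simplifying, using $\log\log m = O(\log w)$ (from $w=\Omega(\log m)$) and $\log(w/\log\sigma)\leq\log w$.

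The only minor difference is tactical. The paper sidesteps your monotonicity argument for $\frac{\log y}{y}$ by simply \emph{setting} $y=\frac{w}{\log\sigma}$ --- implicitly using that the block-size parameter $b$ in the underlying construction may be taken as any value not exceeding the true minimal length, so one is free to plug in $b=\frac{w}{\log\sigma}$ directly and read off the bound. This is marginally cleaner (no need to argue that $y\mapsto\frac{\log y}{y}$ is decreasing), while your version has the virtue of sticking literally to the statement of Theorem~\ref{theorem1} where $y$ denotes the actual minimal length rather than a tunable parameter. Either way the arithmetic and the conclusion are the same.
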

For the case of even larger minimal length, we can get optimal query time :
\begin{corollary}
\label{corollary3}
Given a set $S$ of $d$ strings of total length $m$ where each string is of length at least $(\log d+\log w)\frac{w}{\log\sigma}$ characters, we can build a data structure occupying $O(m\log m)$ bits of space such that given any text $T$ of length $n$, we can find all occurrences of strings of $S$ in $T$ in the optimal $O(n\frac{\log\sigma}{w}+occ)$ time.
\end{corollary}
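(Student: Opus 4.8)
The plan is to obtain Corollary~\ref{corollary3} as a direct specialization of Theorem~\ref{theorem1}. Applying the theorem verbatim yields a data structure of size $O(m\log m)$ bits answering queries in time
\[
O\!\left(n\left(\frac{\log d+\log y+\log\log m}{y}+\frac{\log\sigma}{w}\right)+occ\right),
\]
so the space bound is already the claimed one, and it only remains to show that under the hypothesis $y\geq(\log d+\log w)\frac{w}{\log\sigma}$ the first term inside the parentheses is absorbed by the second, i.e. that $\frac{\log d+\log y+\log\log m}{y}=O\!\left(\frac{\log\sigma}{w}\right)$. I would establish this by bounding the three summands of the numerator separately.

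For the $\log d$ term the hypothesis gives $y\geq\log d\cdot\frac{w}{\log\sigma}$ immediately, so $\frac{\log d}{y}\leq\frac{\log\sigma}{w}$. For the $\log\log m$ term I would invoke the standing assumption $w=\Omega(\log m)$, which implies $\log w=\Omega(\log\log m)$ and hence $\log\log m=O(\log w)$; combined with $y\geq\log w\cdot\frac{w}{\log\sigma}$ this yields $\frac{\log\log m}{y}=O\!\left(\frac{\log\sigma}{w}\right)$.

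The delicate summand is $\frac{\log y}{y}$, because a naive estimate of the shape $\frac{\log y}{\log w}$ fails when $y$ is much larger than its lower bound (for instance $y$ may be as large as $m$ when $d=1$), so one cannot simply cancel against the hypothesis. The key observation I would use is that the function $y\mapsto\frac{\log y}{y}$ is decreasing once $y$ exceeds a small constant, so its value over the admissible range is maximized at the smallest allowed length $y_0=(\log d+\log w)\frac{w}{\log\sigma}$. There $\log y_0=\log(\log d+\log w)+\log\frac{w}{\log\sigma}=O(\log d+\log w)$, whence
\[
\frac{\log y}{y}\leq\frac{\log y_0}{y_0}=\frac{O(\log d+\log w)}{(\log d+\log w)\frac{w}{\log\sigma}}=O\!\left(\frac{\log\sigma}{w}\right).
\]

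Summing the three estimates shows $\frac{\log d+\log y+\log\log m}{y}=O\!\left(\frac{\log\sigma}{w}\right)$, so the query time collapses to the optimal $O\!\left(n\frac{\log\sigma}{w}+occ\right)$, which completes the argument. I expect the monotonicity argument for the $\log y$ term to be the only step requiring genuine care; the other two bounds and the space claim follow by direct substitution of the length hypothesis into Theorem~\ref{theorem1}.
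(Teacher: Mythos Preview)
Your argument is correct and follows the same overall route as the paper: specialize Theorem~\ref{theorem1} and verify that the term $\frac{\log d+\log y+\log\log m}{y}$ is $O(\frac{\log\sigma}{w})$ under the length hypothesis.

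The one difference worth noting is in the handling of the $\log y$ summand. The paper does not use your monotonicity argument; instead it simply ``sets $y=\frac{w(\log d+\log w)}{\log\sigma}$'', i.e.\ it chooses the block size $b$ in the underlying construction to equal the assumed lower bound rather than the actual minimal pattern length. With $b$ fixed to that value, $\log b=O(\log d+\log w)$ is immediate and no monotonicity is needed. Your version instead keeps $y$ as the true minimal length (as Theorem~\ref{theorem1} is literally stated) and handles the potentially large $y$ via the decrease of $\frac{\log y}{y}$. Both are valid; the paper's shortcut is quicker but leans on the freedom to pick $b\leq y$ in the construction, while your argument works straight from the theorem statement without reopening the proof.
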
  
The dependence of the bounds in theorem~\ref{theorem1} and its corollaries on minimal patterns lengths is not unusual. 
This dependence exists also in average-optimal algorithms like BDM, BNDM and their multiple patterns variants~\cite{CR94,CCGJLPR94,NR98}. Those  algorithms achieve a $\frac{y\log\sigma}{\log d+\log y}$ speedup factor on average requiring that the strings are of minimal length $y$. 
Our query time is the addition of a term which represents the time necessary to read all the characters of text in the RAM model and a term which is similar to the query time of the  average optimal algorithms. 
\\
We also show a variation of the first theorem which uses the four russian technique and which will mostly be useful in case the minimal length is too short: 
\begin{theorem}
\label{theorem2}
Given a set $S$ of $d$ strings of total length $m$ and an integer parameter $\alpha$, we can build a data structure occupying $O(m\log m+\sigma^\alpha\log^2\alpha\log m)$ bits of space such that given any text $T$ of length $n$, we can find all $occ$ occurrences of strings of $S$ in $T$ in time $O(n\frac{\log d+\log s+\log\log m}{\alpha}+occ)$.
\end{theorem}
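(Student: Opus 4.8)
The plan is to take the window-based matching algorithm that underlies Theorem~\ref{theorem1} and accelerate it with the four Russian (super-alphabet) technique, so that the role played there by the shortest string length $y$ is taken over by the free parameter $\alpha$. The bound of Theorem~\ref{theorem1} degrades precisely when the shortest string is short, because the window is then forced to be short; here I would decouple the window size from the pattern lengths by fixing it to $\alpha$ and precomputing, once and for all, the outcome of matching all patterns against every possible block of $\alpha$ text characters. Reading the text one block at a time, each block is handled by a single table lookup plus a bounded amount of decoding.

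First I would keep the $O(m\log m)$-bit matching structure of Theorem~\ref{theorem1} as a backbone, which already pays for the first space term. Working in the natural regime $\alpha\log\sigma\le w$ (otherwise a block is read in $O(\alpha\log\sigma/w)$ time, which is still dominated by the per-block work below), a block of $\alpha$ consecutive text characters fits in one machine word, can be read in $O(1)$ time, and can be used directly as an index into a precomputed table. Processing the text block by block therefore costs $O(1)$ per block just to read and look up, i.e. $O(1/\alpha)$ per character, which is why no separate $\frac{\log\sigma}{w}$ reading term survives in the final bound.

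The heart of the construction is the four Russian table, indexed by the $\sigma^\alpha$ possible blocks $B\in\Sigma^\alpha$. The main obstacle, and the reason care is needed, is that this table must be \emph{independent} of the current automaton state: were it indexed by (state, block) it would have $\Theta(m)\cdot\sigma^\alpha$ entries and break the space bound. I would derive state-independence from the restart structure of the window algorithm, where each window scan begins from the fixed initial configuration and, once a full block of $\alpha$ characters has been consumed, the relevant matched-prefix information is a function of $B$ alone (for a pattern of length at most $\alpha$, reading $\alpha$ characters erases all memory of the previous window). For each block $B$ the entry then stores the information needed to drive the next block together with a compact $O(\log^2\alpha\,\log m)$-bit encoding of the positions inside $B$ at which patterns end and their pattern identifiers, from which the occurrences are enumerated by rank/select-style unranking; summed over the $\sigma^\alpha$ entries this yields the claimed $O(\sigma^\alpha\log^2\alpha\,\log m)$-bit term.

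It then remains to bound the query time. For each of the $n/\alpha$ blocks I would pay $O(1)$ for the read and the transition lookup, $O(\log d+\log\alpha+\log\log m)$ to decode the occurrence summary (the $\log d$ accounting for the pattern-identifier space, the $\log\alpha$ for positions inside the block, and the $\log\log m$ for navigating the backbone structure), and $O(1)$ per reported occurrence, giving the total $O(n\frac{\log d+\log\alpha+\log\log m}{\alpha}+occ)$. The residual technical difficulty, which I expect to be the genuinely delicate part, is the correct treatment of occurrences that straddle a block boundary and of patterns longer than $\alpha$: the former I would handle by feeding the end-of-block state of each block into the lookup of the next one (a bounded amount of extra per-block bookkeeping), and the latter by treating the block outcome as a set of candidates to be confirmed through the backbone automaton's output and suffix links, charging each confirmation to the $occ$ term.
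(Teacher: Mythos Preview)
Your proposal has a genuine gap in how it handles state-dependence. You correctly observe that a table indexed by (state, block) would be too large, but your proposed fix---deriving ``state-independence from the restart structure of the window algorithm''---does not work: the algorithm behind Theorem~\ref{theorem1} does \emph{not} restart at each block; it maintains the AC automaton state across blocks, and that state is essential both for computing the next state and for reporting occurrences of patterns longer than $\alpha$ that begin before the current block. Your later suggestion to ``feed the end-of-block state of each block into the lookup of the next one'' reinstates exactly the state-dependence you just argued away, and your plan to confirm long-pattern candidates ``through the backbone automaton'' and ``charge each confirmation to the $occ$ term'' fails because false candidates (non-occurrences) cannot be charged to $occ$; nothing in your sketch bounds their number.

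The paper's route is different in a crucial way: the four-Russian table is used \emph{only} for occurrences that lie entirely inside the current block (Lemma~\ref{lemma:report2_lemma}), and such occurrences are by construction independent of the automaton state. Transitions and boundary-straddling occurrences are still handled exactly as in Theorem~\ref{theorem1}, via the state-dependent Lemmata~\ref{lemma:trans_lemma} and~\ref{lemma:report1_lemma} with $b=\alpha$; these contribute the $O(\log d+\log\alpha+\log\log m)$ per block and the $O(m\log m)$ space. The only reason Theorem~\ref{theorem1} required $b\le y$ was that a pattern shorter than $b$ could sit strictly inside a block and be missed by Lemma~\ref{lemma:report1_lemma}; tabulation plugs precisely this hole, and only this hole. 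That clean division of labour---tabulate the state-independent in-block matches, keep the state-dependent machinery for everything else---is the missing idea in your plan.
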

The theorem could be interpreted in the following way: having some additional amount $t$ of available memory space, we can achieve a speedup factor $\frac{\alpha}{\log d+\log \alpha}$ for $\alpha=\log_\sigma t$ using a data structure which occupies $O(m\log m+t)$ bits of space.  
\\ The theorem gives us two interesting corollaries which depend on the relation between $m$ and $n$. In the case where $n\geq m$, by setting $t=n^{\epsilon}$ for any $0<\epsilon<1$, we get the following corollary: 

\begin{corollary}
\label{corollary4}
Given a set $S$ of $d$ strings of total length $m$, we can build a data structure occupying $O(m\log m+n^{\epsilon})$ bits of space such that given any text $T$ of length $n$, we can find all occurrences of strings of $S$ in $T$ in time $O(n\frac{\log d+\log\log_\sigma n+\log\log m}{\log_\sigma n}+occ)$, where $\epsilon$ is any constant such that $0<\epsilon<1$.
\end{corollary}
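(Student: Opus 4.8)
The plan is to obtain Corollary~\ref{corollary4} as a direct specialization of Theorem~\ref{theorem2}, choosing the integer parameter $\alpha$ to be a constant fraction of $\log_\sigma n$. Concretely, I would fix a constant $\epsilon'$ with $0<\epsilon'<\epsilon$ and set $\alpha=\lfloor\epsilon'\log_\sigma n\rfloor$ (for $n$ large enough that $\alpha\geq 1$; smaller $n$ are absorbed into the hidden constants). Plugging this single value of $\alpha$ into the theorem, everything reduces to two asymptotic verifications: that the space stays within $O(m\log m+n^{\epsilon})$, and that the query time simplifies to the claimed expression.

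For the space, note that with this choice $\sigma^{\alpha}\leq \sigma^{\epsilon'\log_\sigma n}=n^{\epsilon'}$, so the tabulation term of Theorem~\ref{theorem2} is $\sigma^{\alpha}\log^2\alpha\log m = O(n^{\epsilon'}\log^2(\log_\sigma n)\log m)$. Since we are in the regime $n\geq m$, we have $\log m\leq\log n$, so the factor $\log^2(\log_\sigma n)\log m$ is polylogarithmic in $n$. The gap between the exponents $\epsilon'$ and $\epsilon$ then swallows this polylogarithmic factor, $n^{\epsilon'}\cdot\mathrm{polylog}(n)=O(n^{\epsilon})$, and the total space is $O(m\log m+n^{\epsilon})$ as required.

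For the time, substitute $\alpha=\Theta(\log_\sigma n)$ into the bound $O\!\left(n\frac{\log d+\log\alpha+\log\log m}{\alpha}+occ\right)$ of Theorem~\ref{theorem2}. Here $\log\alpha=\log(\epsilon'\log_\sigma n)+O(1)=O(\log\log_\sigma n)$ and $1/\alpha=\Theta(1/\log_\sigma n)$, so the query time collapses to $O\!\left(n\frac{\log d+\log\log_\sigma n+\log\log m}{\log_\sigma n}+occ\right)$, matching the statement.

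I expect the only genuinely delicate point to be the space bookkeeping: one must not naively take $\alpha=\log_\sigma(n^{\epsilon})=\epsilon\log_\sigma n$, since the polylogarithmic factors $\log^2\alpha\log m$ multiplying $\sigma^{\alpha}=n^{\epsilon}$ would then push the space strictly past $n^{\epsilon}$. Backing $\alpha$ off to a slightly smaller constant fraction $\epsilon'<\epsilon$ of $\log_\sigma n$ is precisely what creates the polynomial slack needed to absorb those factors, and it costs nothing in the time bound since $\alpha$ changes only by a constant factor. Everything else is a routine asymptotic simplification.
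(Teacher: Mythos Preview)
Your proposal is correct and follows essentially the same approach as the paper: the paper sets $\alpha=c\log_\sigma n$ for a constant $c<\epsilon/2$ (playing the role of your $\epsilon'$) and then checks that $\sigma^\alpha\log^2\alpha\log m=O(n^c(\log\log_\sigma n)^2\log m)=O(n^{\epsilon})$, exactly as you do. Your explicit remark about why one must back off from $\epsilon$ to a strictly smaller exponent is precisely the only delicate point, and the paper handles it in the same way.
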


In the case $m\geq n$ 
we can get a better speedup by setting $t=m$:
\begin{corollary}
\label{corollary5}
Given a set $S$ of $d$ strings of total length $m$, we can build a data structure occupying $O(m\log m)$ bits of space such that given any text $T$ of length $n$, we can find all occurrences of strings of $S$ in time $O(n\frac{\log d+\log\log m}{\log_\sigma m}+occ)$.
\end{corollary}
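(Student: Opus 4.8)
The plan is to obtain Corollary~\ref{corollary5} as a direct instantiation of Theorem~\ref{theorem2}, choosing the parameter $\alpha$ as large as the space budget $O(m\log m)$ permits. Since we are in the regime $m\geq n$, the natural target is to make the tabulation tables index roughly $m$ distinct blocks, i.e.\ to take $\alpha$ close to $\log_\sigma m$ so that $\sigma^\alpha\approx m$, which is exactly the quantity suggested by setting $t=m$ in the interpretation following Theorem~\ref{theorem2}.

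First I would pin down the precise value of $\alpha$. Setting $\alpha=\log_\sigma m$ exactly is tempting but slightly too costly: the space term $\sigma^\alpha\log^2\alpha\log m$ would then equal $m\log^2(\log_\sigma m)\log m$, which exceeds the allowed $O(m\log m)$ by a $\log^2(\log_\sigma m)$ factor. To absorb this factor I would instead take $\alpha=\log_\sigma m-\log_\sigma\log_\sigma m=\log_\sigma(m/\log_\sigma m)$, so that $\sigma^\alpha=m/\log_\sigma m$. Then the tabulation term becomes $\frac{m}{\log_\sigma m}\log^2\alpha\log m$, and since $\alpha=\Theta(\log_\sigma m)$ we have $\log^2\alpha=O((\log\log_\sigma m)^2)=O(\log_\sigma m)$, so the whole term is $O(m\log m)$, giving total space $O(m\log m)$ as required.

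It then remains to simplify the query time of Theorem~\ref{theorem2} under this choice. The key observations are that $\alpha=\Theta(\log_\sigma m)$, so the denominator $\alpha$ can be replaced by $\log_\sigma m$ up to a constant factor, and that $\log\alpha=O(\log\log_\sigma m)=O(\log\log m)$, so the $\log\alpha$ term in the numerator is dominated by the $\log\log m$ term already present. Substituting these into the bound $O\!\left(n\frac{\log d+\log\alpha+\log\log m}{\alpha}+occ\right)$ of Theorem~\ref{theorem2} yields the claimed $O\!\left(n\frac{\log d+\log\log m}{\log_\sigma m}+occ\right)$.

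This argument is essentially a bookkeeping exercise, so there is no deep obstacle; the only point requiring care is the choice of $\alpha$ strictly below $\log_\sigma m$. Unlike Corollary~\ref{corollary4}, where the target space $n^{\epsilon}$ leaves polynomial slack that trivially swallows the $\log^2\alpha\log m$ factors, here we are constrained to $O(m\log m)$ exactly, so the extra $\log^2\alpha$ factor coming from the tabulation tables must be paid for by shaving a lower-order $\log_\sigma\log_\sigma m$ term off $\alpha$ --- while simultaneously checking that this shaving does not degrade the asymptotic order of $\alpha$, and hence leaves the speedup factor $\Theta(\log_\sigma m/(\log d+\log\log m))$ unchanged.
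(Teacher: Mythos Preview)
Your proposal is correct and follows the same route as the paper: instantiate Theorem~\ref{theorem2} with $\alpha=\Theta(\log_\sigma m)$ and check that both the space and time bounds simplify as claimed. The only difference is the specific choice of $\alpha$. You take $\alpha=\log_\sigma(m/\log_\sigma m)$ and then argue that $(\log\log_\sigma m)^2=O(\log_\sigma m)$ to absorb the extra factors; the paper more simply sets $\alpha=c\log_\sigma m$ for any constant $c<1$, so that $\sigma^\alpha=m^c$ and the tabulation term becomes $m^c(\log\log_\sigma m)^2\log m=o(m)$, which is trivially $O(m\log m)$. In other words, the polynomial slack you say is unavailable here (in contrast to Corollary~\ref{corollary4}) is in fact available by exactly the same trick of scaling $\alpha$ by a constant strictly less than $1$; your more delicate additive shaving works but is not needed.
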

We note that in the case $d=1$, the result of corollary~\ref{corollary2} is worse by a factor $\log\log_\sigma n+\log\log m$ than that of Bille which achieves a query time of $O(\frac{n}{\log_\sigma t}+occ)$. However the result of Bille does not extend naturally to $d\geq 1$. The straightforward way of extending Bille's algorithm is to build $d$ data structures and to match the text against all the data structures in parallel. This however would give a running time of $O(n\frac{d}{\log_\sigma n}+occ)$ which is worse than our running time $O(n\frac{\log d+\log\log_\sigma n+\log\log m}{\log_\sigma n}+occ)$ which is linear in $\log d$ rather than $d$.

As of the technique of Fredriksson, in order to obtain query time $O(\frac{n}{\alpha}+occ)$, it needs to use at least space $\Omega(m\sigma^\alpha)$ which can be too much in case $\alpha$ is too large. 
In the case of single pattern matching, we can even get a stronger results as we prove the following theorem: 
\begin{theorem}
\label{theorem3}
Given a string $p$ of length $m$, we can build a data structure occupying $O(m\log m)$ bits of space such that given any text $T$ of length $n$, we can find all $occ$ occurrences of the string $s$ in $T$ in time $O(n(\frac{1}{m}+\frac{\log\sigma}{w})+occ)$.
\end{theorem}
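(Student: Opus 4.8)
The plan is to specialise the windowed backward‑matching scheme behind Theorem~\ref{theorem1} to a single pattern and to show that, when $d=1$, every per‑window query can be answered in constant time; this replaces the $O(\log d+\log y+\log\log m)=O(\log m)$ filtering cost of Theorem~\ref{theorem1} by $O(1)$, and hence turns its $\frac{\log m}{m}$ term into $\frac1m$. Throughout I would fix the block length $b=\Theta(w/\log\sigma)$, the number of characters of $T$ that pack into one machine word, so that any $b$ consecutive text characters can be fetched—using a shift when they are misaligned—and compared against a packed block of $p$ in $O(1)$ time. The intended decomposition of the running time is then: an $O(n/m)$ term counting filtering steps, an $O(n/b)=O(n\frac{\log\sigma}{w})$ term for reading the text block by block, and the unavoidable $O(occ)$ term for reporting. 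It is worth noting that these two non‑output terms trade places at $m=w/\log\sigma$: when $m\ge w/\log\sigma$ the reading term dominates and the goal is simply to stay within $O(n\frac{\log\sigma}{w})$, while when $m<w/\log\sigma$ the whole pattern fits in one machine word and the filtering term $O(n/m)$ dominates, which I would handle by a separate sampling argument for word‑length patterns.

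Concretely, I would scan $T$ with a window of length $m$ and, inside each window, run a backward factor scan in the spirit of BNDM~\cite{NR98}, reading $b$ characters at a time from the right end: block by block I extend the longest suffix of the window that is still a factor of $p$, report an occurrence whenever that factor reaches length $m$, and stop the scan and advance the window as soon as the current text suffix fails to be a factor of $p$. Since a window is examined with at most $m/b$ block fetches and there are $O(n/m)$ windows, the total number of fetches is $O(n/b)$, which matches the reading term; provided each block is processed in $O(1)$ time and each window carries only $O(1)$ bookkeeping overhead, the remaining filtering cost is $O(n/m)$. The single‑pattern objects I would store are the packed pattern itself, the suffix automaton (equivalently the suffix tree) of $p$, and a dictionary of the length‑$b$ factors of $p$ keyed by an $O(\log m)$‑bit fingerprint; all of these fit in $O(m\log m)$ bits, the fingerprints being what keeps the dictionary from growing to the $O(mw)$ bits that storing blocks verbatim would cost.

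The step I expect to be the real obstacle is twofold, and it is precisely what separates this theorem from the $d=1$ case of Theorem~\ref{theorem1}. First, the backward scan must advance by an entire block of $b$ characters in $O(1)$ time while using only $O(m\log m)$ bits: a transition table indexed by all $\sigma^{b}$ possible blocks (the super‑alphabet / four‑russian route) gives constant‑time transitions but destroys the space bound, whereas searching for a block in the suffix tree or suffix array of $p$, character by character or by binary search, costs $\Theta(\log m)$ and reintroduces exactly the factor we are trying to remove. The escape for a single pattern is that only factors of $p$ can survive the scan, so a fetched block can be identified against $p$ in $O(1)$ by looking up its fingerprint in the dictionary and confirming the hit with a single packed word comparison, while the set of surviving pattern positions is maintained incrementally rather than recomputed from scratch. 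Second, and more delicate, one must guarantee the $O(n/m)$ bound on filtering steps in the \emph{worst} case rather than merely on average: plain backward matching can be forced into $\Omega(n)$ short shifts and thus $\Omega(nm)$ time on periodic patterns, so the shift rule has to be reinforced using the pattern's own periodicity structure (recording, during each scan, the longest window suffix that is a prefix of $p$ and shifting by the corresponding safe amount) so that each window provably advances by $\Omega(m)$. Establishing this worst‑case shift guarantee, together with the constant‑time block transition, is where the argument will have to do its real work; once both are in place, the three terms add up to the claimed $O\!\left(n\bigl(\frac1m+\frac{\log\sigma}{w}\bigr)+occ\right)$ within $O(m\log m)$ bits.
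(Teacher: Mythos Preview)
Your plan is a genuinely different route from the paper's, and as written it has a real gap: you correctly isolate the two hard subproblems---$O(1)$ block transitions in $O(m\log m)$ bits, and a worst-case $\Omega(m)$ shift---but you do not solve either of them, and both resist the sketches you offer.  For the transition, ``maintain the set of surviving pattern positions incrementally'' is the BNDM bitmask shift-and-AND; when $m>w$ the mask occupies $\Theta(m/w)$ words and the AND already costs $\Theta(m/w)$ per block, so $m/b$ blocks per window give $\Theta(m^2\log\sigma/w^2)$ per window, overshooting the budget; and precomputing one $m$-bit mask per distinct length-$b$ factor is $\Theta(m^2)$ bits, which is not $O(m\log m)$ since $w$ may be much larger than $\log m$.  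For the shift, recording the longest window suffix that is a prefix of $p$ is exactly the BNDM safe shift, and on $p=a^m$, $T=a^n$ it is $1$; the known linearisations of backward matching (Turbo-type tricks, or switching to KMP after a hit) yield $O(n)$ worst-case work, not $O(n/m)$ windows.

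The paper avoids both obstacles by abandoning backward scanning altogether.  It slides a window of length $m+h$ with $h=\lfloor m/3\rfloor$ and \emph{always} advances by $h+1=\Theta(m)$, so the $O(n/m)$ window count is by construction rather than by a shift argument.  Inside a window it reads a single anchor $q=W[h..m-1]$ of length $\lceil 2m/3\rceil$, looks it up by perfect hashing among the length-$(m-h)$ factors of $p$, and confirms with one packed comparison---this is the only ``transition'', and it is $O(m\log\sigma/w)$.  If $q$ is a factor, it must be periodic (two occurrences of a $\ge 2m/3$-length factor in $p$ overlap), so $q=(uv)^tu$; the paper then counts occurrences of $q$ in $W$ by two longest-repetition scans of $uv$ to the left and $vu$ to the right (each $O(m\log\sigma/w)$ via packed XOR and MSB/LSB), and a short case analysis on whether the border pieces $y,z$ of $p=yqz$ match yields all occurrences of $p$ in $W$ in $O(m\log\sigma/w+occ_W)$.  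The periodicity is used not to justify a variable shift, but to enumerate matches inside a fixed window; that is the idea your proposal is missing.
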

An important implication of this theorem is that single pattern matching in optimal time $O(n\frac{\log\sigma}{w}+occ)$ is possible for strings of length $m\geq \frac{w}{\log\sigma}$. 
\\ Similarly to the case of~\ref{theorem1}, we can use the four russian technique to improve the result of~\ref{theorem2} in case $m$ is too short: 
\begin{theorem}
\label{theorem4}
Given a string $p$ of length $m$ we can build a data structure occupying $O(m\log m+\sigma^\alpha\log \alpha)$ bits of space such that given any text $T$ of length $n$, we can find all $occ$ occurrences of strings of $S$ in $T$ in time $O(n\frac{1}{\alpha}+occ)$. 
\end{theorem}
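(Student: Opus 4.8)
The plan is to obtain Theorem~\ref{theorem4} as the single-pattern specialization of the four-russian construction behind Theorem~\ref{theorem2}, exploiting the fact that for one pattern the tabulated information collapses to a single number, so each table entry costs only $O(\log\alpha)$ bits. The statement is interesting precisely when $m$ is short, so I will assume $\alpha\ge 2m$ (when $\alpha<2m$ we have $n/\alpha=\Theta(n/m)$ and the bound already follows from Theorem~\ref{theorem3}). I will also work in the regime $\alpha\log\sigma=O(w)$, so that an arbitrary block of $\alpha$ consecutive text characters fits in $O(1)$ machine words and can be used directly as an index into a precomputed table in constant time; outside this regime reading the block already costs $\Omega(\alpha\frac{\log\sigma}{w})$ and no speedup over Theorem~\ref{theorem3} is expected.

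The data structure stores the pattern $p$ (from which its KMP automaton, of $O(m\log m)$ bits, can be built) together with one table $F$ indexed by all $\sigma^\alpha$ strings of length $\alpha$. For a block $B\in\Sigma^\alpha$, I store in $F[B]$ the leftmost starting position of an occurrence of $p$ inside $B$ (a value in $\{0,\dots,\alpha-m\}$), or a special symbol $\bot$ if $p$ does not occur in $B$. Since $\alpha\ge m$, every occurrence of $p$ fits inside a single block, so $F[B]$ is well defined, and a position in $[0,\alpha]$ needs only $O(\log\alpha)$ bits. Hence $F$ occupies $O(\sigma^\alpha\log\alpha)$ bits and the total space is $O(m\log m+\sigma^\alpha\log\alpha)$, as claimed. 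The table is filled offline by running each block through the KMP automaton (or by a short recursion on $B$), which affects neither the space bound nor the query time.

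The search keeps a window start $i$, reads the length-$\alpha$ block $B$ beginning at text position $i$ in $O(1)$ time, and looks up $f=F[B]$. If $f=\bot$ then no occurrence of $p$ starts in $[i,\,i+\alpha-m]$, so I advance $i$ by $\alpha-m+1$; if $f\ne\bot$ I report an occurrence at $i+f$ and advance $i$ to $i+f+1$. Because consecutive conceptual windows overlap by $m-1$ characters, every occurrence is fully contained in some examined block and is reported exactly once (advancing only to $i+f+1$ ensures the reported one is skipped but no later one is jumped over), while the final $<\alpha$ characters are finished by a direct scan costing $O(\alpha)$. For the running time, every iteration returning $\bot$ advances $i$ by $\alpha-m+1\ge\alpha/2$, and every other iteration is charged to a distinct reported occurrence; thus the number of constant-time lookups, and hence the query time, is $O(n/\alpha+occ)$.

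The main obstacle is reconciling the $O(\log\alpha)$-bit entries with the need to report \emph{all} occurrences: a single block may contain up to $\alpha-m+1$ occurrences, far more than $O(\log\alpha)$ bits can encode, and indexing the table also by the automaton state would reintroduce an $m$ factor into the space. The resolving idea is to tabulate only the leftmost occurrence and to re-look-up the shifted window after each report, so the extra lookups are amortized against $occ$ rather than stored. This is exactly where the single pattern is cheaper than Theorem~\ref{theorem2}, whose table must additionally identify which of the $d$ patterns occurs and therefore pays the extra $\log\alpha\,\log m$ factor in space.
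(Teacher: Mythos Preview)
Your argument is correct and achieves the claimed bounds, but it differs from the paper's route. The paper also splits into the cases $m\ge\alpha/2$ (handled by Theorem~\ref{theorem3}) and $m<\alpha/2$, but in the latter case it advances the window by a fixed $\alpha/2$ at every step and, for each of the $\sigma^\alpha$ possible blocks, tabulates \emph{all} occurrences of $p$ that start in the first half of the block. Thus the paper does $O(n/\alpha)$ lookups regardless of $occ$, and each lookup directly yields the list of matches in that half-window. You instead store only the leftmost occurrence and re-query after each report, charging the extra lookups to $occ$.

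What each approach buys: your encoding makes the $O(\log\alpha)$ bits per table entry self-evident, since a single position (or $\bot$) is all that is stored. The paper's version, taken literally, would need $\Theta(\alpha\log\alpha)$ bits per entry to list up to $\alpha/2$ positions; its $O(\sigma^\alpha\log\alpha)$ claim implicitly relies on the fact that multiple occurrences of a single pattern inside a block form an arithmetic progression with step equal to the period of $p$, so a pair (first position, count) suffices. Your amortization trick sidesteps this periodicity argument entirely. Conversely, the paper's fixed $\alpha/2$ shift keeps the number of table accesses independent of $occ$, which is slightly cleaner from a lookup-count standpoint but makes no difference to the final $O(n/\alpha+occ)$ bound since reporting the occurrences already costs $\Theta(occ)$.
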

This last theorem matches the result achieved in Bille's  algorithm. 
\section{Components}

Before we present the details of our main results we first present the main tools and components which are to be used in our solutions. In particular we will make use of several data structures and operations which exploit the power of the word-RAM model. We first describe some basic operations which will be explicitly used for implementing our algorithms. Then we describe some classical geometric and string processing oriented data structures which will be used as black-box components in our data structures.  
\subsection{Bit parallel string processing}
Before we describe the basic bit-parallel operations, we first define how the characters are packed in words. We assume that the pattern and the text are packed in a similar way. Each character is encoded using $\log\sigma$ bits. The text $T$ is thus encoded using a bit array $B_T$ which occupies $m\log\sigma$ bits which is $\lceil m\log\sigma/w\rceil$ words. We thus assume that have a representation of the text $T$ which fits in a word array $W_T$ \footnote{Notice that when $w$ is not multiple of $\log\sigma$ , a character could span a boundary between two consecutive words}. An important technical point is about the endianness, that is the way the bits are ordered in a word which influences the way the characters are packed in memory. We basically have two possibilities: either the bits in a word are ordered from the least to the most significant (little endian) or the converse (big endian). Here we illustrate how a particular character $T[i]$ of the text is extracted. We only present the first case as (little endian) as the latter can easily be deduced from the former: 
\begin{enumerate}
\item First compute $i_0=(i\log\sigma)\bmod w$.
\item Then read the two words $W_0=W_T[\lfloor i\log\sigma/w\rfloor]$ and $W_1=W_T[\lfloor (i+1)\log\sigma/w\rfloor]$.
\item At last we distinguish two cases: 
\begin{itemize}
\item If $\lfloor i\log\sigma/w\rfloor=\lfloor (i+1)\log\sigma/w\rfloor$ (the character $i$ does not span two consecutive words), then return $(W_0\gg i_0)\bmod \sigma$ 
\item Otherwise (the character spans the two consecutive words $W_0$ and $W_1$) we return $(W_0\gg i_0)+(W_1 \bmod 2^{\log\sigma-(w-i_0)})$.
\end{itemize}
\end{enumerate}
It can easily be seen that the extraction of a character can be done in constant tome. However, in general we will want to make operations on groups of characters instead of manipulating characters one bye one. This permits to get much faster operations on strings. In particular we will makes use of the following lemma whose proof is omitted and which can easily be implemented using standard bit-parallel instructions. 
\begin{lemma}
Given two strings of lengths $m<\frac{w}{\log\sigma}$ bits, one can compare them (for equality) in $O(1)$ time using bit-parallelism. Moreover, given two strings of length $m$, one can compare them in time $O(m\frac{\log\sigma}{w})$. 
\end{lemma}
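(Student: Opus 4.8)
The plan is to realize both claims with standard word-parallel arithmetic on packed strings, using the fact established earlier that each character occupies exactly $\log\sigma$ bits and that $\Theta(w/\log\sigma)$ characters are packed per machine word. For the first (short) case, I would exploit that two strings of total bit-length less than $w$ each fit in a single machine word, so the comparison reduces to a constant number of register operations. Concretely, I would load both packed strings into two words $X$ and $Y$ (possibly after masking off the high-order bits beyond position $m\log\sigma$ so that garbage past the string boundary does not interfere), and then test $X = Y$ with a single integer comparison. Since loading, masking (a bitwise \textbf{and} with a precomputed mask $2^{m\log\sigma}-1$), and comparing are all unit-cost RAM operations, the whole equality test runs in $O(1)$ time.

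For the second (general) case, I would reduce comparison of two length-$m$ strings to $\lceil m\log\sigma/w\rceil = O(m\frac{\log\sigma}{w} + 1)$ comparisons of full words. I would iterate over the packed word representations $W_X[0],W_X[1],\dots$ and $W_Y[0],W_Y[1],\dots$ of the two strings, comparing them word by word; as soon as two corresponding words differ the strings are unequal, and if all word-pairs agree the strings are equal (taking care, on the final partial word, to mask out bits beyond the end of the strings before comparing, exactly as in the short case). The number of words is $\Theta(m\log\sigma/w)$ when $m\log\sigma \geq w$ and $O(1)$ otherwise, so the total running time is $O(m\frac{\log\sigma}{w})$ in the regime where this bound dominates, matching the stated claim.

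The correctness of the reduction rests on the character-extraction scheme described just before the lemma: since characters are packed contiguously and each occupies a fixed $\log\sigma$ bits, two strings are equal as sequences of characters if and only if their packed bit representations agree on the first $m\log\sigma$ bits, which is precisely what the masked word-by-word comparison checks. One subtlety to handle explicitly is the boundary word, where a character may straddle the word boundary (the footnote notes this can happen when $w$ is not a multiple of $\log\sigma$); but for a pure equality test this causes no difficulty, because we are comparing raw bits and only the final mask, applied identically to both operands, needs to account for where the string ends.

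The main obstacle, such as it is, is not algorithmic depth but rather the careful bookkeeping of the masks and the handling of the final partial word so that stray high-order bits are never compared. Because the lemma only asks for equality (not lexicographic order), no byte-wise or character-wise sign extraction is needed, which keeps the argument routine; this is presumably why the authors elected to omit the proof. I would therefore present the two bounds as immediate consequences of the packing layout plus constant-time word operations, emphasizing the masking step as the only point requiring care.
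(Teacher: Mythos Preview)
Your proposal is correct and is exactly the standard bit-parallel argument the paper has in mind; the paper in fact omits the proof entirely, stating only that the lemma ``can easily be implemented using standard bit-parallel instructions,'' and your masked single-word comparison for the short case together with word-by-word scanning for the general case is precisely that implementation.
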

\paragraph{MSB and LSB operations}
Our solutions for single string matching uses the special instruction $MSB(x)$ which returns the most significant bit set in a word and similarly $LSB(x)$ which returns the least significant set bit in a word. 
Those two operations can be simulated in constant time using classical RAM operations (see~\cite{AHNR98,FW93,Brodnik93}).
\begin{lemma}
The two functions $MSB(x)$ and $LSB(x)$ 
can be implemented in $O(1)$ time provided that the bit-string $x$ is of length $O(w)$ bits.
\end{lemma}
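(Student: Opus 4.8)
The plan is to treat $MSB$ as the substantial case and obtain $LSB$ either symmetrically or by reduction. Since $x$ occupies $O(w)$ bits it spans a constant number of machine words, so it suffices to describe the computation for a single $w$-bit word; the $O(1)$-word case follows by scanning the constantly many words. For $LSB$, I would first isolate the lowest set bit by computing $x \AND (-x)$ using two's-complement negation, which yields a power of two whose exponent is exactly $LSB(x)$; identifying the exponent of a power of two is itself a special case of $MSB$, so the whole lemma reduces to implementing $MSB$ in constant time.

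For $MSB$, I would use the standard two-level blocking of Fredman and Willard~\cite{FW93}. Conceptually cut the $w$ bits of $x$ into $b=\lceil\sqrt w\rceil$ consecutive blocks, each of $b$ bits. The first step is to compute a $b$-bit summary word $S$ in which bit $j$ is set precisely when block $j$ of $x$ is nonzero; this is done in $O(1)$ by the usual sentinel-bit manipulation, reserving one spare high bit per block, forcing a carry out of every nonzero block with a single subtraction after an OR with the lowest bit of each block, and then collecting those carry bits by a shift-and-mask. The second step finds the most significant set bit of $S$, that is the index of the topmost nonzero block; this is an $MSB$ query on a value of only $b$ significant bits. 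The third step extracts that block by a shift and a mask and computes its internal $MSB$, again an $MSB$ query on $b$ bits. Combining the block index and the in-block index, scaled and added appropriately, yields $MSB(x)$.

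It remains to solve $MSB$ on a number $n$ with at most $b$ significant bits in constant time, and here I would use parallel comparison. The key identity is that $MSB(n)+1$ equals the number of thresholds $2^0,2^1,\dots,2^{b-1}$ that are $\le n$. I would replicate $n$ into $b$ equally spaced fields, each of width $\Theta(\sqrt w)$ with a spare sentinel bit, by a single multiplication with a constant that has a one in each field position, subtract a precomputed constant holding the $b$ thresholds, read off the sentinel bit of each field to obtain a comparison vector, and then sum these $b$ indicator bits by a further multiplication (horizontal summation) followed by a shift. Because each field has width $\Theta(\sqrt w)$ and there are $\sqrt w$ of them, the replicated data and the partial sums fit in $O(1)$ words, so every step is a constant number of arithmetic, shift and masking operations. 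All the constants involved, namely the field-spreading multiplier, the threshold vector, the summation multiplier and the block masks, depend only on $w$ and can be precomputed once in $O(w)$ time and stored in $O(1)$ words.

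The main obstacle is the bit-level bookkeeping rather than any conceptual difficulty: one must choose the field width and sentinel positions so that neither the carries produced when detecting nonzero blocks nor the additions performed during horizontal summation ever overflow from one field into the next, and one must handle the case where $w$ is not a perfect square by rounding $b$ up to $\lceil\sqrt w\rceil$ and padding the top block, together with the degenerate input $x=0$. Once these invariants are fixed, correctness follows by checking that the sentinel of each field records exactly the intended comparison, and the running time is $O(1)$ by construction. The $LSB$ routine is then immediate from the reduction above, completing the proof.
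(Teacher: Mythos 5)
Your sketch is correct and is precisely the construction the paper leans on: the paper offers no proof of this lemma, merely citing \cite{AHNR98,FW93,Brodnik93}, and what you have written out is the standard Fredman--Willard argument from those references (reduce $LSB$ to $MSB$ via $x \AND (-x)$, then two-level $\lceil\sqrt{w}\rceil$-blocking with a nonzero-block summary word and a parallel-comparison/threshold-counting step for the $\sqrt{w}$-bit case). The only nitpick is that compacting the $\sqrt{w}$ sentinel bits of the summary word into consecutive positions needs a constant multiplication, not just a shift-and-mask, but this is the same trick you already invoke for the horizontal summation, so the argument goes through.
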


\paragraph{Longest repetition matching}
We will make use of the following tool: given a string $p$ of length $m$ and a string $s$ of length $n>m$ where both strings are over the same alphabet of size $\sigma$, we would wish to have the following two operations: 
\begin{enumerate}
\item Longest prefix repetition matching: find the largest $i$ such that $p^i$ ($p$ repeated $i$ times) is a prefix of $s$.
\item Longest suffix repetition matching: find the largest $i$ such that $p^i$ is a suffix of $s$.
\end{enumerate}
We argue that both operations can be done in $O(n\frac{\log\sigma}{w})$. First consider the computation of Longest prefix repetition of a string $p$ of length $m$ into a string $s$ of length $n$. We have two cases:
\begin{enumerate}
\item Suppose that $m\log\sigma\geq w/2$. In this case, it suffice to compare successively $s[mi,m(i+1)-1]$ with $p$ for increasing values of $i$ until we reach $i=\lfloor\frac{n}{m}\rfloor$ or find a mismatch. Each comparison takes time $O(1)$ and thus the whole operation takes at most $O(n\frac{\log\sigma}{w})$ time. 
\item Suppose that $m\log\sigma<w/2$, in this case we first compute $k=\lfloor\frac{w}{m\log\sigma}\rfloor$ and then compute $p'=p^k$ and note that $w/2<m'\log\sigma\leq w$. Now we first compare $s[m'j,m'(j+1)-1]$ with $p'$ for increasing values of $j$ until we reach $j=\lfloor\frac{n}{m'}\rfloor$ or find a mismatch. Clearly this step takes time $O(n\frac{\log\sigma}{w})$ also. Now, we have determined that $jk\leq i<j(k+1)$. In the final step we compute $q=s[m'j,m'(j+1)-1]$ and finally $r=(q\oplus p')$ (where $\oplus$ denotes the xor operator) and let $t=LSB(r)$ (or $t=MSB$ depending on the \emph{endiannes} or the way the processors orders the bits in its words). Now clearly, $t$ is the position of the first bit in which $p'$ and $q$ differ. It is clear that the first character in which $p'$ and $q$ differ, is precisely character number $\lfloor t/\log\sigma\rfloor$. From there we deduce that $i=jk+\lfloor t/\log\sigma\rfloor$. The computation of the $LSB$ and the xor operator both take constant time. 
\end{enumerate}
The computation of the longest suffix repetition is symmetric to the computation of the longest prefix repetition except that we use $MSB$ operation instead of $LSB$ or vice-versa depending on the endiannes. 
\begin{lemma}
Given a string $p$ of length $m$ and a string $s$ of length $n$ where $n>m$ the longest prefix (and suffix) repetition of $s$ in $p$ can be found in time $O(m\frac{\log\sigma}{w})$.
\end{lemma}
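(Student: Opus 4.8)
The plan is to prove this by the case analysis already foreshadowed above, splitting on whether a single copy of $p$ already occupies a constant fraction of a machine word, since that is exactly what decides whether a naive block-by-block scan extracts full word-level parallelism. In both cases the strategy is identical: scan $s$ in consecutive aligned blocks, each covering $\Theta(w/\log\sigma)$ characters, compare each block against a suitably packed copy of $p$ using the comparison lemma proved earlier in this section, and stop at the first block containing a mismatch. Since every step advances the scan by $\Theta(w/\log\sigma)$ characters and the matched prefix $p^i$ has length at most $n$, the total number of steps is $O(n\frac{\log\sigma}{w})$ (here I am proving the bound in terms of $n$, which is unavoidable since one must read a prefix of $s$ of length proportional to the answer; the $m$ in the statement appears to be a slip for $n$).

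First I would dispatch the easy regime $m\log\sigma \ge w/2$. Here a single copy of $p$ occupies $O(m\frac{\log\sigma}{w})$ words, so comparing $p$ against $s[mi, m(i+1)-1]$ costs $O(m\frac{\log\sigma}{w})$, and the number of such comparisons before reaching $i=\lfloor n/m\rfloor$ or a mismatch is $O(n/m)$; the product is $O(n\frac{\log\sigma}{w})$, and the case assumption moreover gives $O(n/m)=O(n\frac{\log\sigma}{w})$ directly. The delicate regime is $m\log\sigma<w/2$: a single block of $m$ characters is then much shorter than a word, so comparing one copy of $p$ at a time would incur $\Theta(n/m)$ comparisons, which strictly exceeds $O(n\frac{\log\sigma}{w})$ in this range. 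To recover the lost parallelism I would precompute the power $p'=p^k$ with $k=\lfloor w/(m\log\sigma)\rfloor$, so that $w/2 < (km)\log\sigma \le w$ and $p'$ fits in $O(1)$ words; scanning $s$ in super-blocks of $m'=km$ characters then uses only $O(n/m')=O(n\frac{\log\sigma}{w})$ comparisons of $O(1)$ time each.

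The last point is to convert a mismatch found at super-block $j$, which localizes the answer to $jk \le i < (j+1)k$, into the exact value of $i$ in $O(1)$ time. For this I would read $q=s[m'j, m'(j+1)-1]$, form $r=q\oplus p'$, and take $t=LSB(r)$ (or $MSB(r)$, according to the endianness), the index of the first bit in which $q$ and $p'$ disagree; the first disagreeing character is then $\lfloor t/\log\sigma\rfloor$, giving $i=jk+\lfloor t/\log\sigma\rfloor$. The longest suffix repetition is symmetric, aligning blocks from the right end of $s$ and swapping the roles of $LSB$ and $MSB$.

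The main obstacle I expect is exactly this small-$m$ regime: the case split and the large-$m$ bound are routine, but the whole force of the lemma is the $O(n\frac{\log\sigma}{w})$ guarantee even when $p$ is very short, which is what compels the packing trick $p'=p^k$ together with the single-word $\oplus$/$LSB$ computation that pinpoints $i$ inside the failing super-block. The care needed lies in checking that $p'$ is genuinely computable and comparable in $O(1)$ (it spans a constant number of words) and that the bit index returned by $LSB$/$MSB$ maps correctly to a character offset under the chosen packing convention; everything else is bookkeeping.
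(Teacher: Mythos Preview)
Your proposal is correct and follows essentially the same route as the paper: the identical case split at $m\log\sigma=w/2$, the packing $p'=p^k$ in the small-$m$ regime, and the final $\oplus$/$LSB$ (resp.\ $MSB$) step to locate the mismatch inside the failing super-block. You are also right that the $m$ in the time bound is a slip for $n$, and your Case~1 analysis (cost $O(m\log\sigma/w)$ per comparison times $O(n/m)$ comparisons) is in fact more careful than the paper's own wording.
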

\subsection{Data structures components}
For our results we will use several classical data structures which are illustrated with the following lemmata:
\begin{lemma}\cite{W83}
\label{lemma1}
Given a collection of $n$ intervals over universe $U$ where for any two intervals $s_1$ and $s_2$ we have either $s_1\cap s_2=s_1$, $s_1\cap s_2=s_2$ or $s_1\cap s_2=\emptyset$ (for any two intervals either one is included in the other or the two intervals are disjoint). We can build a data structure which uses $O(n\log n)$ bits of space such that for any point $x$, we can determine the interval which most tightly encloses $x$ in $O(n\log\log n)$ time (the smallest interval which encloses $x$). 
\end{lemma}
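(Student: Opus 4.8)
The plan is to reduce the tightest-enclosing-interval query to a single predecessor search and then invoke the predecessor structure of \cite{W83} as a black box. The key structural observation is that, because the family is \emph{laminar} (any two intervals either nest or are disjoint), sorting the at most $2n$ interval endpoints partitions the universe $U$ into $O(n)$ \emph{elementary segments}, namely the maximal gaps between consecutive distinct endpoints. Every point lying in a fixed elementary segment is contained in exactly the same set of intervals, so the tightest enclosing interval is \emph{constant} on each segment. It therefore suffices to precompute one answer per segment and, at query time, to identify which segment contains the query point $x$.

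For the preprocessing I would first sort the endpoints and run a left-to-right sweep maintaining a stack of currently open intervals: at a left endpoint I push the corresponding interval, and at a right endpoint I pop it. Laminarity guarantees that the endpoints behave like well-nested parentheses, so the innermost open interval is always on top of the stack; hence the stack top throughout the segment just to the right of each endpoint is exactly the tightest interval enclosing that segment (or a null marker when the stack is empty). I record this interval identifier in an array indexed by segment number. I then store the sorted endpoint values in a van Emde Boas style predecessor structure as supplied by \cite{W83}. The sorted endpoints together with the per-segment answer array use $O(n)$ words, and the predecessor structure of \cite{W83} also occupies $O(n)$ words, so the total stays within the claimed $O(n\log n)$ bits.

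To answer a query on a point $x$ I would perform a single predecessor search for the largest endpoint not exceeding $x$; this pinpoints the elementary segment containing $x$, and a constant-time lookup in the answer array then returns the precomputed tightest enclosing interval. The cost is dominated by the predecessor search, which \cite{W83} performs in $O(\log\log n)$ time once the endpoints are treated in rank space so that the relevant universe is polynomially bounded in $n$. The main obstacle, and indeed the whole reason for citing \cite{W83}, is to obtain this $O(\log\log n)$ predecessor time \emph{simultaneously} with $O(n)$-word space: a plain van Emde Boas layout would cost $\Theta(U)$ space, and Willard's y-fast trie is exactly the device that resolves this tension. The only remaining care points are a consistent tie-breaking rule during the sweep for intervals that share an endpoint (so that outer intervals are pushed before inner ones, keeping the innermost on top) and fixing the open/closed convention at segment boundaries, both of which are routine.
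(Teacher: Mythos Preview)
Your approach is essentially identical to the paper's: store the $O(n)$ endpoints in Willard's y-fast trie, precompute for each of the resulting elementary segments the tightest enclosing interval, and answer a query by one predecessor search followed by a table lookup. The paper's justification is just a couple of sentences and omits the details you supply (the stack-based sweep for preprocessing, the tie-breaking at shared endpoints, and the rank-space reduction to force the predecessor universe down to $\mathrm{poly}(n)$), but the underlying construction is the same.
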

For implementing the lemma, we store the set of interval endpoints in a predecessor data structure, namely the Willard's y-fast trie ~\cite{W83} which is a linear space version of the Van Emde Boas tree~\cite{BKZ77}. Then those points divide the universe of size $U$ into $2n+1$ segments and each segment will point to the interval which most tightly encloses the segment. Then a predecessor query will point to the segment which in turn points to the relevant interval. This problem can be thought as a restricted $1D$ stabbing problem (in the general problem we do not have the condition that for any two intervals either one is included in the other or the two intervals are disjoint).

\begin{lemma}
\label{lemma2}
Given a collection $S$ of $n$ strings of arbitrary lengths and a function $f$ from $S$ into $[0,m-1]$, we can build a data structure which uses $O(n\log m)$ bits and which which computes $f(x)$ for any $x\in S$ in time $O(|x|/w)$ (where $|x|$ is the length of $x$ in bits). When queried for any $y\notin S$ the function returns any value from the set $f(S)$. 
\end{lemma}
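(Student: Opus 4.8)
The plan is to reduce the problem to a minimal perfect hash function composed with a plain array of values, exploiting the crucial relaxation in the statement that on a non-key the structure may return \emph{any} member of $f(S)$. The design has three layers: a fingerprint turning variable-length string keys into word-sized integer signatures, a minimal perfect hash function on those signatures, and an array holding the $f$-values.

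First I would convert the variable-length keys into fixed-length signatures. Reading a string $x$ in $\lceil|x|/w\rceil$ chunks of $w$ bits each, I would evaluate a Karp--Rabin / Horner polynomial fingerprint $h$ over a prime field, which costs $O(|x|/w)$ word operations. Taking the field size to be $\Theta(n^2 L)$, where $L$ bounds the number of $w$-bit chunks of the longest key, makes each signature fit in $O(\log n+\log L)=O(\log(n+m))$ bits, hence in a single word (this is exactly where the hypothesis $w=\Omega(\log(n+m))$ is used). Since two distinct keys of at most $L$ chunks collide under $h$ with probability at most $L/p$, a union bound over the $\binom{n}{2}$ pairs shows that $h$ is injective on $S$ with constant probability; at construction time I would resample the fingerprint parameters until injectivity on $S$ holds (expected $O(1)$ trials) and store the chosen parameters in $O(w)$ bits. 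This signature step is the one that requires care and is the main obstacle: the collision probability of a polynomial fingerprint grows with key length, so the field size must be scaled to the longest key, and it is precisely this reading of the whole string that forces the evaluation time to depend on $|x|$.

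Second, on the $n$ now-distinct signatures I would build a minimal perfect hash function $\phi$ mapping them bijectively onto $[0,n-1]$ (for instance via the construction of Hagerup and Tholey), occupying $O(n)$ bits and evaluable in $O(1)$ time on any word-sized argument, including arguments that are not stored signatures, on which it returns some arbitrary index of $[0,n-1]$. I would then allocate an array $A[0\..n-1]$ of $\lceil\log m\rceil$-bit cells and set $A[\phi(h(x))]=f(x)$ for every $x\in S$. This array uses $n\lceil\log m\rceil=O(n\log m)$ bits, dominating the $O(n)$ bits of $\phi$ and the $O(w)$ bits of fingerprint parameters, so the total space is $O(n\log m)$ as claimed.

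A query on a string $x$ then computes $h(x)$ in $O(|x|/w)$ time, evaluates $i=\phi(h(x))$ in $O(1)$ time, and returns $A[i]$, for a total of $O(|x|/w)$. For a genuine key $x\in S$, $\phi(h(x))$ equals the index assigned at construction, so $A[i]=f(x)$. For a non-key $y\notin S$, $\phi(h(y))$ is merely some index in $[0,n-1]$, so $A[\phi(h(y))]=f(x')$ for whichever key $x'$ occupies that slot, and is therefore a genuine element of $f(S)$ --- exactly the ``don't care'' guarantee demanded by the statement, which is why storing only the $f$-values, rather than an XOR-based retrieval table whose output need not lie in $f(S)$, is the correct design choice.
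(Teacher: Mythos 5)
Your construction is correct and is exactly the route the paper takes: it sketches the same reduction, namely a string hash function (the polynomial fingerprint of~\cite{DGMP}) collapsing each key to a word-sized integer, followed by minimal perfect hashing~\cite{FKS84,HT01} into an array of $f$-values. Your write-up merely supplies the details (field size, collision/union bound, resampling, and the observation that the ``don't care'' clause for $y\notin S$ is what makes a value array suffice) that the paper leaves implicit.
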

This result can easily be obtained using minimal perfect hashing~\cite{FKS84,HT01}. Though perfect hashing is usually defined for fixed $O(w)$ bits integers, a standard string hash function~\cite{DGMP} can be used to first reduce the strings to integers before constructing the minimal perfect hashing on the generated integers. 

\begin{lemma}\cite[Theorem~1]{CHSV08}
\label{lemma3a}
Given a collection $S$ of $n$ strings of variable lengths occupying a memory area of $m$ characters (the strings can possibly overlap), we can build an index which uses $O(n\log m)$ bits so that given any string $x$, we can find the string $s\in S$ which is the longest among all the strings of $S$ which are prefix of $x$ in time $O(|x|/w+\log n)$ (where $|x|$ is the length of $x$ in bits). More precisely, the data structure returns $prrank_S(s)$.
Moreover the data structure is able to tell whether $x=s$.
\end{lemma}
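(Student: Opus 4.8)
The plan is to realize the query as a \emph{longest-prefix-match} search supported by a compacted trie together with a suffix-array–style binary search, so that the two cost terms $O(|x|/w)$ and $O(\log n)$ arise from separate mechanisms.

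First I would build the compacted (Patricia) trie $\mathcal T$ of the $n$ strings of $S$, which has at most $2n-1$ nodes. Each node $v$ stores: its string depth (the length in characters of the string spelled from the root to $v$), a pointer into the $m$-character memory area to one occurrence of that string, a flag telling whether $v$ is \emph{marked} (i.e.\ the spelled string belongs to $S$; because strings of $S$ may be prefixes of one another, marked nodes need not be leaves), and the interval of $prrank$ values of the strings of $S$ lying in the subtree of $v$. I would also precompute, for every node, a pointer to its nearest marked ancestor. All of this is $O(1)$ fields of $O(\log m)$ bits per node, hence $O(n\log m)$ bits. Finally I would lay the strings of $S$ out sorted in prefix-lexicographic order together with the longest-common-prefix values needed by the Manber--Myers binary search (the $\operatorname{lcp}$ of the two pivots examined at each of the $O(\log n)$ steps), again in $O(n\log m)$ bits.

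The key structural observation is that the strings of $S$ that are prefixes of $x$ are exactly the marked nodes lying on the root-to-exit path of $x$ in $\mathcal T$, so the answer is the deepest such marked node. Moreover, writing $s^-$ and $s^+$ for the prefix-lexicographic predecessor and successor of $x$ in $S$, the quantity $\ell^\ast=\max(\operatorname{lcp}(x,s^-),\operatorname{lcp}(x,s^+))$ equals $\max_{s\in S}\operatorname{lcp}(x,s)$; hence the exit depth of $x$ is $\ell^\ast$, its exit path coincides with that of whichever of $s^\pm$ attains the maximum, and the sought string is the deepest marked ancestor, at string depth at most $\ell^\ast$, of that leaf. I would therefore run the binary search of Manber and Myers on the sorted array to obtain $s^-$, $s^+$ and $\ell^\ast$: using the precomputed pivot $\operatorname{lcp}$ values, each step does $O(1)$ work apart from extending the already-matched prefix, and since the matched region only grows, the total number of compared characters is $O(|x|/\log\sigma+\log n)$. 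Comparing characters $w/\log\sigma$ at a time via the packed-comparison lemma turns this into $O(|x|/w+\log n)$ time. A weighted-ancestor query then locates, on the relevant leaf's path, the node $u$ at string depth $\ell^\ast$ (or the top of the edge spanning that depth), and the precomputed nearest-marked-ancestor pointer of $u$ gives the answer node, from which I read off $prrank_S(s)$; the test $x=s$ reduces to checking $|x|=|s|$.

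The hard part is keeping the navigation cost \emph{additive}, $O(|x|/w+\log n)$, rather than the naive multiplicative $O((|x|/w)\log n)$. This is exactly what the $\operatorname{lcp}$-accelerated binary search buys, but it must be combined carefully with word-packed comparison so that the $O(|x|/\log\sigma)$ characters of forward progress are charged only once across all $O(\log n)$ steps, and with a weighted-ancestor and nearest-marked-ancestor mechanism realizable in $O(\log n)$ time and $O(n\log m)$ bits, so that neither term is dominated.
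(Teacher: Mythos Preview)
Your approach is correct and matches the paper's own account of this lemma: the paper does not give a proof but simply cites \cite{CHSV08} and remarks that the data structure is ``a string B-tree combined with an LCP array and a compacted trie built on the set of strings, setting the block size of the string B-tree to $O(1)$.'' A string B-tree with constant block size is, in effect, the LCP-accelerated binary search you describe (and indeed the compacted trie reference \cite{MM93} is Manber--Myers), so your reconstruction with the Patricia trie, the $\operatorname{lcp}$-guided search yielding $\ell^\ast$, and the nearest-marked-ancestor step is exactly the intended mechanism, spelled out in more detail than the paper itself provides.
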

This result which is obtained using a string B-tree~\cite{FG99} combined with an LCP array and a compacted trie~\cite{MM93} built on the set of strings, and setting the block size of the string B-tree to $O(1)$. The following lemma is symmetric of the previous one. 
\begin{lemma}
\label{lemma3b}
Given a collection $S$ of $n$ strings of variable lengths occupying a memory area of $m$ characters of space (the strings can possibly overlap), 
we can build an index which uses $O(n\log m)$ bits so that given any string $x$, we can find the string $s\in S$ which is the longest among all the strings of $S$ which are suffix of $x$ in time $O(|x|/w+\log n)$ (where $|x|$ is the length of $x$ in bits). More precisely, the data structure returns $surank_S(s)$.
Moreover the data structure is able to tell whether $x=s$.
\end{lemma}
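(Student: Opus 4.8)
The plan is to obtain this lemma as the mirror image of Lemma~\ref{lemma3a} under string reversal. Write $z^{R}$ for the reversal of a string $z$, and set $S^{R}=\{\,s^{R}:s\in S\,\}$. The reduction rests on two elementary correspondences. First, $s$ is a \emph{suffix} of $x$ if and only if $s^{R}$ is a \emph{prefix} of $x^{R}$, so the longest element of $S$ that is a suffix of $x$ is exactly the reversal of the longest element of $S^{R}$ that is a prefix of $x^{R}$. Second, by the definitions of the two orderings given in Section~2, comparing $s_1^{R}$ and $s_2^{R}$ right-to-left (prefix-lexicographic order) is the same as comparing $s_1$ and $s_2$ left-to-right (suffix-lexicographic order); hence the suffix-lexicographic order on $S$ coincides, via the bijection $s\mapsto s^{R}$, with the prefix-lexicographic order on $S^{R}$, and consequently $surank_S(s)=prrank_{S^{R}}(s^{R})$.

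Given these correspondences, I would build the data structure of Lemma~\ref{lemma3a} on the reversed collection $S^{R}$. Since $|S^{R}|=n$ and the reversed strings occupy the same $m$ characters, the index uses $O(n\log m)$ bits, as required. Querying that structure with $x^{R}$ locates the longest prefix $s^{R}\in S^{R}$ of $x^{R}$, which by the first correspondence identifies the sought suffix $s$; the value $prrank_{S^{R}}(s^{R})$ it returns equals $surank_S(s)$ by the second correspondence; and its equality test carries over verbatim, since $x^{R}=s^{R}$ if and only if $x=s$. This already proves correctness and the space bound.

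The one point that needs care is the time bound: producing $x^{R}$ by an explicit character reversal of the packed representation costs more than $O(|x|/w)$ once $w$ is much larger than $\log\sigma$ (reversing the characters inside a single word takes $\Theta(\log(w/\log\sigma))$ mask-and-shift steps). To avoid this I would not materialise $x^{R}$ at all, but realise the reduction by mirroring the construction of Lemma~\ref{lemma3a}: its three ingredients — a string B-tree, an LCP array, and a compacted trie — are all symmetric under reversal, so I build them on the strings read from their right ends (equivalently on $S^{R}$, at preprocessing cost, which is unconstrained here), and I feed the query by scanning $x$ from its last word to its first. The constant-time block comparisons are then anchored at the opposite end, interchanging the roles of $MSB$ and $LSB$ exactly as in the longest-suffix-repetition computation of Section~2. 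Reading $x$ from the right still costs $O(|x|/w)$, so the total query time is $O(|x|/w+\log n)$.

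I expect no genuine obstacle here: the lemma is advertised as the symmetric counterpart of Lemma~\ref{lemma3a}, and the argument is a pure left/right duality. The only two things that deserve a moment of attention are the bookkeeping in the order correspondence — one must respect the paper's convention that prefix-lexicographic order compares right-to-left, so that reversal truly swaps the two orders rather than fixing them — and the decision to implement the reduction by the mirrored construction rather than by explicitly reversing $x$, which is what keeps the $|x|/w$ term free of a spurious $\log w$ factor. Both are routine once stated.
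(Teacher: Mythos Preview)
Your proposal is correct and matches the paper's intent: the paper gives no proof at all for this lemma, merely stating that it is ``symmetric of the previous one'' (Lemma~\ref{lemma3a}). Your argument is precisely that symmetry spelled out, and the extra care you take---noting that a naive character-by-character reversal of the packed query would cost a $\log(w/\log\sigma)$ factor, and instead mirroring the construction so that $x$ is scanned right-to-left---goes beyond what the paper bothers to say but is exactly the kind of detail needed to justify the $O(|x|/w+\log n)$ bound.
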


\begin{lemma}
\label{lemma4}\cite{C86}
Given a set of $n$ rectangles in the plane, we can build a data structure which uses $O(n\log n)$ bits of space so that given any point $[v,z]$, we can report all the $k$ occurrences of rectangles which enclose that point in time $O(\log n+k)$. 
\end{lemma}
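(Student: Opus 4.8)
The plan is to recognize this as the two-dimensional \emph{point enclosure} (equivalently, rectangle stabbing) reporting problem: a rectangle $[a,b]\times[c,d]$ contains the query point $[v,z]$ exactly when $a\le v\le b$ and $c\le z\le d$, so the task is to report all rectangles that are simultaneously stabbed in both coordinates. Since the claimed bound is $O(n\log n)$ bits, i.e.\ linear space in words, together with the query time $O(\log n+k)$ that is optimal up to the unavoidable $O(\log n)$ search term, the natural route is Chazelle's filtering-search machinery~\cite{C86}, and I would ultimately invoke that reference for the sharp statement; below I sketch the structure I would build to obtain it.

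First I would handle the $x$-coordinate with an \emph{interval tree} rather than a segment tree: build a balanced binary tree over the $x$-endpoints and store each rectangle at the highest node whose splitting value lies inside its $x$-interval $[a,b]$. Each rectangle is then stored exactly once, so the space is $O(n)$ words $=O(n\log n)$ bits. A stabbing query for $v$ follows the root-to-leaf path for $v$, visiting $O(\log n)$ nodes; at a node with splitting value $s$, if $v\le s$ then every rectangle stored there already satisfies $v\le s\le b$, so it encloses $v$ in $x$ iff $a\le v$ (and symmetrically, one needs only $v\le b$ when $v>s$). Thus along the $O(\log n)$ path nodes the $x$-condition collapses to a single one-sided threshold on the left (resp.\ right) $x$-endpoint.

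The remaining job at each visited node is to report those rectangles that pass this one-sided $x$-threshold \emph{and} whose $y$-interval $[c,d]$ contains $z$. I would store, at each node, the rectangles in priority-search-tree-like structures keyed by the relevant $x$-endpoint together with the $y$-interval, so that the candidates meeting the threshold can be enumerated in output order. The obstacle, and the crux of the whole lemma, is that naively composing an $O(\log n)$-node outer search with an $O(\log n+k_i)$ inner stabbing at each node yields $O(\log^2 n+k)$ rather than the claimed $O(\log n+k)$. Removing this extra logarithmic factor is exactly what \emph{filtering search} accomplishes: one allows each node to over-report by a controlled amount and charges the ``wasted'' examined elements either to the single $O(\log n)$ path length or to genuine reported output, so that the total work telescopes to $O(\log n+k)$ while keeping linear word space. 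I would rely on~\cite{C86} for the detailed accounting that makes this telescoping valid, which is the only delicate step.
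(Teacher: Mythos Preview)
Your proposal is correct and, in fact, goes beyond what the paper does: the paper offers no proof at all, simply citing Chazelle~\cite{C86} and noting that this is the planar point enclosure (2D stabbing) problem, for which Chazelle's linear-space filtering-search structure gives the stated bounds (optimal by the lower bound of~\cite{P08a}). Your sketch of the interval-tree decomposition combined with filtering search to shave the extra $\log n$ factor is exactly the content of that reference, so there is nothing to add.
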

The problem solved by lemma~\ref{lemma4} is called the $2D$ stabbing problem or sometimes called the planar point enclosure. The lemma uses the best linear space solution to the problem which is due to Chazelle~\cite{C86} (which is optimal according to the lower bound in ~\cite{P08a}). 

\section{Multiple string matching without tabulation}
\subsection{Overview}
The goal of this section is to show how we can simulate the running of the AC automaton~\cite{AC75}, by processing the characters of the scanned text in blocks of $b$ characters. The central idea of the relies on a reduction of the problem of dictionary matching to the $1D$ and $2D$ stabbing problems, in addition to the use of standard string data structures namely, string B-trees, suffix arrays and minimal perfect hashing on strings.
At each step, we first read $b$ characters of the text, find the matching patterns which end at one of those characters and finally jump to the state which would have been reached after reading the $b$ characters by the AC automaton (thereby simulating all $next$ and $fail$ transitions which would have been traversed by the standard AC automaton for the $b$ characters). Finding the matching patterns is reduced to the $2D$ stabbing problems, while jumping to the next state is reduced to $1D$ stabbing problem. 
The geometric approach has already been used for dictionary matching problem and for text pattern matching algorithms in general. For example, it has been recently used in order to devise compressed indexes for substring matching~\cite{GV05,N04,CHSV08}. Even more recently the authors of~\cite{TWLY09} have presented a compressed index for dictionary matching which uses a reduction to $2D$ stabbing problem. 
\subsection{The data structure}
We now describe the data structure for in more detail. Given the set $S$ of $d$ patterns, we note by $P$ the set of the prefixes of the patterns in $S$ (note that $|P|\leq m+1$). It is a well-known fact that there is a bijective relation between the set $P$ and the set of states of the AC automaton. We use the same state representation as the one used in~\cite{B10a}. That is we first sort the states of the automaton in the suffix-lexicographic order of the prefixes to which they correspond, attributing increasing numbers to the states from the interval $[0,m]$. Thus the state corresponding to the  empty string gets the number $0$, while the state corresponding to the greatest element of $P$ (in suffix-lexicographic order) gets the largest number which is at most $m$. 
We define $state(p)$ as the state corresponding to the prefix $p\in P$. 
\\ 
Now, the characters of the scanned text, are to be scanned in blocks of $b$ characters. 
For finding occurrences of the patterns in a text $T$, we do $\lceil n/b\rceil$ steps. At each step $i\in [0,\lceil n/b\rceil-1]$ we do three actions:
\begin{itemize}
\item Read $b$ characters of the text, $T[ib,(i+1)b-1]$ (or $n-ib\leq b$ characters of the text, $T[ib,n)$ in the last step).
\item Identify all the occurrence of patterns which end at a position $j$ of the text such that $j\in [ib,(i+1)b)$ ($j\in[ib,n)$ in the last step). 
\item If not in the last step go to the next state corresponding to the longest element of $P$ which is a suffix of $T[0,(i+1)b]$. 
\end{itemize}
The details of the implementation of each of the last two actions is given in sections ~\ref{subsec:ident_occ} and~\ref{subsec:simul_trans}. 
\\Our AC automaton representation has the following components:
\begin{enumerate}
\item An array $A$ which contains the concatenation of all of the patterns. This array clearly uses $mb$ bits of space.
\item Let $P_{0<i\leq b}$ be the set of prefixes of $S$ of lengths in $[1,b]$. We use an instance of lemma~\ref{lemma3a}, which we denote by $B_1$ and in which we store the set $P_{0<i\leq b}$ (by means of pointers into the array $A$). Clearly $B_1$ uses $O(db\log m)=O(m\log m)$ bits of space (we have $db$ elements stored in $B_1$ and pointers into $A$ take $\log m$ bits). We additionally store a vector of $|P_{0<i\leq b}|\leq db$ elements which we denote by $T_1$ and which associates an integer in $[0,m)$ with each element stored in $B_1$. The table $T_1$ uses $O(db\log m)=O(m\log m)$

\item We use an instance of lemma~\ref{lemma2}, which we denote by $B_2$ and in which we store all the suffixes of strings in $P$ (or equivalently all factors of the strings in $S$) of length $b$ and for each suffix, store a pointer to its ending position in the array $A$ (if the same factor occurs multiple times in the $S$ we store it only once). As we have at most $m$ elements in $P$ and each pointer (in the array $A$) to each factor can be encoded using $O(\log m)$ bits, we conclude that $B_2$ uses at most $O(m\log m)$ bits of space. 
\item We use an instance of lemma~\ref{lemma3b} which we denote by $B_3$ and in which we store all the suffixes of strings of $S$ of lengths in $[1,b]$ (We note that set by $U_{0<i\leq b}$). It can easily be seen that $B_3$ also uses $O(db\log m)=O(m\log m)$ bits of space. 
\item We use a $1D$ stabbing data structure (lemma~\ref{lemma1}) in which we store $m$ segments where each segment corresponds to a state of the automaton. This data structure which uses $O(m\log m)$ bits of space is used in order to simulate the transitions in the AC automaton. 
We also store a vector of integers of size $m$ which we denote by $T_2$ and which associates an integer with each interval stored in the $1D$ stabbing data structure. The table $T_2$ uses $O(m\log m)$ bits of space.
\item We use a $2D$ stabbing data structure (lemma~\ref{lemma4}) in which we store up to $db$ rectangles. The space used by this data structure is $O(db\log (db))=O(m\log m)$ bits. We also use a table $T_3$ which stores triplets of integers associated with each rectangle. The table $T_3$ will also use $O(db\log m)=O(m\log m)$ bits. 

\end{enumerate}
We deffer the details about the contents of each component to the full version which uses also to the full version.
Central to the working of our data structure is the following technical lemma:
\begin{lemma}
\label{lemma:tech_lemma}
Given a set of strings $X$. We have that for any two strings $x\in X$ and $y\in X$:
\begin{itemize}
\item $prrank_X(y)\in [prrank_X(x),prrank_X(x)+prcount_X(x)-1]$ iff $x$ is a prefix of $y$. 
\item $surank_X(y)\in [surank_X(x),surank_X(x)+sucount_X(x)-1]$ iff $x$ is a suffix of $y$.
\end{itemize}
\end{lemma}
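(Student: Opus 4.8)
The plan is to prove the first bullet (the prefix statement) in full and then obtain the second (the suffix statement) for free by symmetry: since $x$ is a suffix of $y$ exactly when the reversal of $x$ is a prefix of the reversal of $y$, and the suffix-lexicographic order on $X$ is by definition the prefix-lexicographic order on the reversed strings, reversing every string of $X$ turns the second claim into the first. So I would argue only the prefix case and invoke reversal at the very end.

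For the prefix case I would reduce everything to two structural facts about the prefix-lexicographic order $\le$. First, a string always precedes every proper extension of itself; in particular, among all elements of $X$ that have $x$ as a prefix, the string $x$ itself (which lies in $X$) is the smallest, so the number of elements strictly preceding $x$, namely $prrank_X(x)$, is the rank of the first such element. Second, and this is the real content, the elements of $X$ having $x$ as a prefix occupy a contiguous range of the sorted order.

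The hard part will be this contiguity claim, which I would phrase as: if $z_1 \le z_2 \le z_3$ all lie in $X$ and both $z_1$ and $z_3$ have $x$ as a prefix, then so does $z_2$. I would prove it by contradiction. If $z_2$ does not have $x$ as a prefix, then either $z_2$ is a proper prefix of $x$, or $z_2$ and $x$ first disagree at some position $i<|x|$. In the first case $z_2 < x \le z_1$, contradicting $z_1 \le z_2$. In the second case, since $z_1$ and $z_3$ agree with $x$ on their first $|x|$ characters, $z_2$ agrees with all three on positions $0,\dots,i-1$; if $z_2[i] < x[i]$ then $z_2 < z_1$, and if $z_2[i] > x[i]$ then $z_2 > z_3$, each contradicting the chain $z_1 \le z_2 \le z_3$. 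The only subtlety to handle carefully is the length bookkeeping, i.e.\ making sure position $i$ actually exists in $z_1$ and $z_3$; this holds because they both have length at least $|x| > i$.

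Finally I would assemble the pieces. By contiguity the elements of $X$ with prefix $x$ form an interval of ranks; by the minimality of $x$ its left endpoint is $prrank_X(x)$; and by the very definition of $prcount_X(x)$ the interval contains exactly $prcount_X(x)$ elements, hence equals $[prrank_X(x),\,prrank_X(x)+prcount_X(x)-1]$. Because distinct strings receive distinct ranks, a string $y\in X$ has $x$ as a prefix if and only if its rank $prrank_X(y)$ lies in that interval, which is precisely the claimed equivalence. Applying the whole argument to the reversed strings then yields the suffix bullet and completes the proof.
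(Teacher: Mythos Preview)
Your argument is correct. The paper itself omits the proof of this lemma entirely (immediately after the statement it reads ``The proof of the lemma is omitted''), so there is nothing to compare your approach against; your route---showing that the elements of $X$ having $x$ as a prefix form a contiguous block in prefix-lexicographic order, that $x$ is the minimum of that block, and then reading off the rank interval from the size $prcount_X(x)$---is the standard and complete way to establish the claim, and the reversal symmetry for the suffix bullet is exactly right.
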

The proof of the lemma is omitted.
\subsection{Simulating transitions}
\label{subsec:simul_trans}
We will use the representation of states similar to the one used in~\cite{B10a}. That is each state of the automaton corresponds to a prefix $p\in P$ and is represented as an integer $state(p)=surank_P(p)$. The main idea for accelerating transitions is to read the text into blocks of size $b$ characters and then find the next destination state attained after reading those $b$ characters using $B_1$, $T_1$, $B_2$, $T_2$ and the $1D$ stabbing data structure. More precisely being at a state $state(p)$ and after reading next $b$ characters of the text which form a string $q$, we have to find next state which is the state $state(x)$ such that $x\in P$ is the longest element of $P$ which is suffix of $pq$. For that purpose the $1D$ stabbing data structure is used in combination with $B_1$ (which is queried on string $q$) in order to find $state(x)$ in case $|x|\geq b$. Otherwise if no such $x$ is found the data structure $B_2$ will be used to find $state(x)$, where $|x|<b$. 
The following lemma summarizes the time and the space of the data structures needed to simulate a transition.  
\begin{lemma}
\label{lemma:trans_lemma}
We can build a data structure occupying $O(m\log m)$ bits of space such that if the automaton is in a state $t_i$, the state $t_{i+b}$ reached after doing all the transitions on $b$ characters, can be computed in $O(\log d+\log b+\log\log m+\frac{b\log\sigma}{w})$ time. 
\end{lemma}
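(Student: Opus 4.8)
The plan is to reduce a single block-transition to a constant number of queries on the components $B_1$, $T_1$, $B_2$, $T_2$ and the $1D$ stabbing structure, all built once on $S$, and then to add up their costs. First I would fix the semantics of the transition. If the automaton is in state $t_i=state(p)$, where $p\in P$ is the longest element of $P$ that is a suffix of the text read so far, then reading the block $q=T[ib,(i+1)b)$ and iterating the single-character $next$/$fail$ transitions leaves the automaton, by the standard Aho--Corasick invariant, in the state $state(x)$ where $x$ is the \emph{longest element of $P$ that is a suffix of $pq$} (this uses the elementary fact that replacing the current text by its longest suffix in $P$ does not change the outcome of subsequent transitions, applied inductively over the $b$ characters of $q$). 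Thus the entire task is to compute $state(x)$ within the stated budget, with no character-by-character simulation allowed.

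I would then split on the length of the answer $x$. Since $x$ is a suffix of $pq$ and $|q|=b$, either $|x|\geq b$, in which case the last $b$ characters of $x$ are exactly $q$ and $x=cq$ with $c=\mathrm{head}(x)$ a suffix of $p$ and $c\in P$; or $|x|<b$, in which case $x$ lies entirely inside $q$, i.e.\ $x$ is the longest element of $P$ that is a suffix of $q$. Because any candidate of the first kind is at least as long as any of the second, the algorithm first tries to produce a candidate with $|x|\geq b$ and only falls back to the short case if none exists.

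For the short case I would use a longest-suffix structure of the type of Lemma~\ref{lemma3b} built on the elements of $P$ of length less than $b$: queried with $q$ it returns $surank$ of the longest such element that is a suffix of $q$, from which a companion table recovers $state(x)$, at cost $O(\tfrac{b\log\sigma}{w}+\log d+\log b)$. For the long case the key is to turn the two conditions ``$c$ is a suffix of $p$'' and ``$cq\in P$'' into a single stabbing query. By Lemma~\ref{lemma:tech_lemma}, for each $z\in P$ the set of elements of $P$ having $z$ as a suffix is the contiguous suffix-lexicographic range $[surank_P(z),\,surank_P(z)+sucount_P(z)-1]$; these $m$ ranges form a laminar family, which is exactly what the $1D$ stabbing structure of Lemma~\ref{lemma1} stores (with $T_2$ attaching a state number to each range). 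The block $q$ determines, through $B_1$ and $T_1$, the point/universe corresponding to states whose prefix extends by $q$ inside the trie; feeding $surank_P(p)$ into the stabbing structure and taking the tightest enclosing range then yields, via $T_2$, the state $state(cq)$ for the longest admissible $c$. Each operation is a single query: the $B_1$ query costs $O(\tfrac{b\log\sigma}{w}+\log(db))=O(\tfrac{b\log\sigma}{w}+\log d+\log b)$ by Lemma~\ref{lemma3a}, the stabbing query costs $O(\log\log m)$ by Lemma~\ref{lemma1}, and the table look-ups are $O(1)$; summing gives the claimed $O(\log d+\log b+\log\log m+\tfrac{b\log\sigma}{w})$. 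As each component occupies $O(m\log m)$ bits and there are a constant number of them, the total space is $O(m\log m)$ bits.

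The main obstacle I anticipate is the correctness of the long-case reduction: one must prove precisely that the tightest range enclosing $surank_P(p)$, after the universe has been filtered by the $q$-dependent query into $B_1$, corresponds to the longest $c\in P$ that is simultaneously a suffix of $p$ and satisfies $cq\in P$. This is where Lemma~\ref{lemma:tech_lemma} does the heavy lifting, converting ``suffix of'' into interval containment so that the failure-link chain of $p$ becomes exactly the nest of ranges stabbed by $surank_P(p)$. The delicate points are the boundary case $|x|=b$ (where $c$ is empty and $q$ must itself lie in $P$), and the fact that $q$ is in general only a factor of a pattern and need not belong to $P$, so its admissible extensions must be located through $B_1$ and $B_2$ rather than directly through a suffix-rank of $q$ in $P$.
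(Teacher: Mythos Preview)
Your overall plan coincides with the paper's: split on $|x|\geq b$ versus $|x|<b$, handle the short case by a longest-suffix query on the short prefixes $P_{0<i\leq b}$, handle the long case by a single $1D$ stabbing query with Lemma~\ref{lemma:tech_lemma} converting ``suffix of'' into interval containment, and sum the costs. The time and space accounting are correct.

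There is, however, a genuine gap in how you set up the long case. You say the $1D$ structure stores the laminar family $[surank_P(z),\,surank_P(z)+sucount_P(z)-1]$, one range per $z\in P$, and that the query point is $surank_P(p)$. As stated, the tightest enclosing range is then always the one for $z=p$ itself, independently of $q$; the constraint ``$cq\in P$'' is nowhere encoded in the stabbing query. Your phrase ``the block $q$ determines, through $B_1$ and $T_1$, the point/universe'' is the right instinct but is not yet a mechanism: a separate stabbing structure per value of $q$ would explode the space, and post-filtering the stabbed intervals by $q$ would cost time proportional to the failure-chain length of $p$. The paper's device is to fold $q$ into the key by bit-concatenation. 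For each $x\in P$ with $|x|\geq b$, write $x=p'q'$ with $|q'|=b$, hash $q'$ through $B_2$ (a minimal perfect hash on all length-$b$ factors of the patterns, i.e.\ Lemma~\ref{lemma2}) to obtain an integer $ID(q')$, and store the interval $[\,ID(q')\cdot state(p'),\; ID(q')\cdot(state(p')+sucount_P(p')-1)\,]$ over a universe of size $O(m^2)$. The query point is $ID(q)\cdot state(p)$; because $ID(q)$ occupies the high-order bits, only intervals with $q'=q$ can contain the point, and among those the tightest one yields the longest admissible $p'$. This composite-key step is what is missing from your description. Note also that it is $B_2$, not $B_1$, that performs the $q$-indexing in the long case (and $B_2$ being only a hash, one first compares $q$ against the retrieved factor in time $O(b\log\sigma/w)$ before issuing the stabbing query); $B_1$ and $T_1$ enter only in the short fallback.
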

The current state of the AC automaton is actually represented as a value $cur\in[0,m]$. At the beginning the automaton is at state $cur=0$, and we read the text in blocks of $b$ characters at each step. At the end of each step we have to determine the next state reached by the automaton which is represented by the number $next\in[0,m]$. We now show how the transitions of the AC automaton for a block of $b$ characters are simulated. Suppose that we are at step $i$ and the automaton is in the state $state(p)$ corresponding to a prefix $p$. Now we have to read the substring $q=T[ib,(i+1)b-1]$ 
and the next state to jump to after reading $q$, is the state $state(x)$ corresponding to the longest element $x$ of $P$ which is a suffix of $pq$. 
\\
For simulating transitions we use $B_1$,$T_1$, $B_2$, $T_2$ and the $1D$ stabbing data structure. The table $T_1$ associates to each element of $P_{0<i\leq b}$ (each element of $P$ whose length is in $[1,b]$) sorted in suffix lexicographic order the identifier of the states to which they correspond. That is for each $x\in P_{0<i\leq b}$ we set $T_1[surank_{P_{0<i\leq b}}(x)]=state(x)$. We recall that given any element $x\in P_{0<i\leq b}$, $surank_{P_{0<i\leq b}}(x)$ can be obtained by querying $B_1$ for the element $x$.  
\\
The $1D$ stabbing data structure (lemma~\ref{lemma1}) which is built on numbers occupying $2\log (m+1)$ bits each, stores $m$ intervals each of which is defined by two points, where each point is defined by a number which occupies $2\log (m+1)$ bits. 
Let $x\in P$ be decomposed by $x=p'q'$ where $q'$ is the suffix of $x$ of length $b$ and $p'$ is the prefix of $x$ of length $|x|-b$. Let $ID(q')$ be the pointer associated with $q'$ in $B_2$ (recall that $B_2$ associates a unique pointer in $A$ for each occurring factor $q'$ of elements of $S$). We store in the $1D$ stabbing data structure the interval $[I_0,I_1]$, where $I_0=ID(q')\cdot state(p')$ and $I_1=ID(q')\cdot (state(p')+sucount_P(p')-1)$ (recall that $sucount_P(p')$ is the number of elements of $P$ which have $p'$ as a suffix). The $1D$ stabbing data structure naturally associates a unique integer identifier from $[0,m]$ with each interval stored in it. We additionally use a table $T_2$ of size $m$ indexed with the interval identifiers. More precisely, let $j$ be the identifier corresponding to the interval associated with the state $state(p)$ for $p\in P$. We let $T_2[j]=state(p)$. That  way once we have found a given interval from the $1D$ stabbing data structure, we can index into table $T_2$ in order to find the corresponding state. 
\\Now queries will happen in the following way: At step $i$, we are at state $state(p)$ corresponding to a prefix $p$ and we are to read the sequence $q=T[ib,(i+1)b-1]$, and must find the longest element of $P$ which is a prefix of $pq$. For that, we do the following steps:
\begin{enumerate}
\item We first query $B_2$ for the string $q$ which will return a unique identifier $ID(q')$ which is in fact a pointer to the ending position of a factor $q'$. Now, we compare $q'$ with $q$. If they are not equal, we go to step $5$, otherwise we continue with the next step. 
\item We query the $1D$ stabbing data structure for the point $ID(q)\cdot state(p)$ This query returns the interval (identified by a variable $j$) which most tightly encloses the point $ID(q)\cdot state(p)$ if it exists. This interval (if it exists) corresponds to a prefix $x=p'q'$ of $P$ such that $q'=q$ and $p'$ is the longest element of $P$ which is a prefix of $p$. If the query returns no interval, we conclude that we have no element of $P$ of length $\geq b$ which is a suffix of $pq$ and go to step $5$, otherwise we continue with the next step.
\item We retrieve $T_2[j]$ which gives us the destination state which concludes the transition. 
\item At this step we are sure that no element of $P$ of length at least $b$ is a suffix of $pq$. We thus do a query on $B_1$ for the string $q$ in order to find the longest element of $P$ which is a suffix of $pq$. Note that this element must be of length $<b$ and thus must be stored in $B_1$ and also must be a suffix of $q$. Let $ID(q)$ be the identifier of the returned element. 
\item By reading $T_1$ we retrieve the identifier of the destination state which is given by $T_1[ID(q)]$. This concludes the transition. 
\end{enumerate}
We now give a formal proof of lemma~\ref{lemma:trans_lemma}

\begin{proof} 
We now prove that the above algorithm effectively simulates $b$ consecutive transitions in the automaton. Recall that we are looking for the state corresponding to the longest element $x\in P$ which is a suffix of $pq$. 
After we have read the string $q$, we query the data structure $B_2$ to retrieve a pointer $ID(q')$ to a string $q'$ which is a factor of some string in $S$ (or equivalently a suffix of some element in $P$). Then we compare $q$ with $q'$ in time $O(b\log\sigma/w)$. Now we have two cases: 
\begin{itemize}
\item The comparison is not successful, we conclude that no prefix in $P$ has $q$ as a suffix and hence the element $x\in P$ must be shorter than $b$ (otherwise it would have had $q$ as a suffix). That means that $x$ is a suffix of $q$ ($x$ is a suffix of $pq$ shorter than $q$) and hence has length at most $b$. Hence we go the step $4$ to query $B_1$ for the string $q$ in order to retrieve $x$.
\item The comparison is successful, in which case we know there exists at least one element of $P$, which has $q$ as a suffix. Now we go to step $2$ , querying the $1D$ stabbing for the point $K=ID(q)\cdot state(p)$. The query returns an interval $[I_0,I_1]$ where $I_0=ID(q')\cdot state(p')$ and $I_1=ID(q')\cdot (state(p')+sucount_P(p')-1)$ for some prefixes $p'$ and $q'$. Now it can easily be proven that $q'=q$ and that $p'q$ is the longest element of $P$ which is a suffix of $pq$. This is proved by contradiction. By lemma~\ref{lemma:tech_lemma} we have that $p'$ must be a suffix of $p$, and we suppose that the longest suffix is $p''\neq p'$ having an associated interval $[J_0,J_1]$ and $K\in [J_0,J_1]$. By definition $p'$ is a suffix of $p''$  and thus by lemma~\ref{lemma:tech_lemma} $[J_0,J_1]\subseteq [I_0,I_1]$ which contradicts the fact that $[I_0,I_1]$ is among all the intervals stored in the $1D$ data structure the one which most tightly encloses $K$ (which is implied by lemma~\ref{lemma1}). 
\end{itemize}
Now, in the first case, we go to step $4$ in order to find the longest prefix in $P$ which is a suffix $q$. In the second case, we go to step $2$ looking among the elements which have $q$ as a suffix for the longest one which is a suffix of $pq$. If the search is unsuccessful, we conclude that no such element $x$ exists and thus $x$ must be shorter than $q$ and thus go to step $4$ to find the longest prefix in $P$ which is a suffix of $q$. 
\\
The total space usage is clearly $O(m\log m)$ bits as each of $B_1$,$B_2$,$T_1$,$T_2$ and the $1D$ stabbing data structure uses $O(m\log m)$ bits. 
\\
Concerning the query time, it can easily be seen that the steps $3$ and $5$ take constant time, step $1$ takes time $O(\frac{b\log\sigma}{w})$, step $2$ takes time $O(\log\log m)$ and finally step $4$ takes $O(\frac{b\log\sigma}{w}+\log d+\log b)$ time. Summing up, the total time for a transition is $O(\frac{b\log\sigma}{w}+\log d+\log b+\log\log m)$.
\end{proof}

\subsection{Identifying matching occurrences}
\label{subsec:ident_occ}
In order to identify matching patterns the $2D$ stabbing data structure is used in combination with $B_1$. 
\begin{lemma}
\label{lemma:report1_lemma}
Given a parameter $b$ and a set $S$ of variable length strings of total length $m$ characters over an alphabet of size $\sigma$, we can build a data structure occupying space $O(m \log m)$ bits, such that if the automaton is at a state $t_i$ after reading $i$ characters of a text $T$, all the $occ_i$ matching occurrences of $T$ which end at any position in $T[i,i+b]$ (or $T[i,|T|-1]$ if $i+b\geq |T|$) and begin at any position in $T[0,i]$ can be computed in $O(\log d+\log b+\frac{b\log\sigma}{w}+occ_i)$ time. 
\end{lemma}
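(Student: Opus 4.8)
The plan is to reduce occurrence reporting to a single planar point-enclosure ($2D$ stabbing) query, exactly mirroring the way Lemma~\ref{lemma:trans_lemma} reduced a transition to a $1D$ stabbing query. First I would exploit that the block size satisfies $b\le y$, which is the key simplification: then every pattern has length at least $b$, so any occurrence of a pattern $P_k$ that ends at a position $j\in[i,i+b]$ must begin at $j-|P_k|+1\le i$, i.e. at or before $i$. Hence the occurrences the lemma must report are exactly the patterns that cross (or touch) the boundary $i$, and each one admits a split $P_k=a\cdot c$ at position $i$: the head $a=T[\,\cdot\,,i-1]$ lies in the already-scanned text and the tail $c=T[i,j]$ is a prefix of the just-read block $q$. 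Two facts drive the reduction: $a$ is a prefix of $P_k$ (so $a\in P$) and is a suffix of $p$ (being a suffix of $T[0,i-1]$ that is no longer than the longest pattern-prefix suffix $p$), while $c$ is a prefix of $q$ of length $t=|c|\in[1,b]$.

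Next I would turn both membership tests into interval containment through Lemma~\ref{lemma:tech_lemma}. For the head, ``$a$ is a suffix of $p$'' is equivalent to $state(p)=surank_P(p)\in[surank_P(a),\,surank_P(a)+sucount_P(a)-1]$; this interval becomes the first coordinate of a rectangle, and the query value on that axis is the fixed number $state(p)$. For the tail, ``$c$ is a prefix of $q$'' is collapsed to a single point query as follows: a longest-prefix query over the candidate tails (an instance of Lemma~\ref{lemma3a}, playing for the tail dimension the role $B_1$ plays in the transition) returns $c^{**}$, the longest tail that is a prefix of $q$; since any tail that is a prefix of $q$ is no longer than $c^{**}$, it is in fact a prefix of $c^{**}$, so by Lemma~\ref{lemma:tech_lemma} these are exactly the tails $c$ with $prrank(c^{**})\in[prrank(c),\,prrank(c)+prcount(c)-1]$. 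This interval becomes the second coordinate and the query value is the single rank $\rho=prrank(c^{**})$. I would therefore store, for each pattern $P_k$ and each admissible tail length $t\in[1,b]$, one axis-parallel rectangle equal to (suffix-interval of the head)~$\times$~(prefix-interval of the tail), together with a triplet in $T_3$ recording $P_k$ and the offset $t$, so that the endpoint $i+t-1$ can be emitted in $O(1)$ time. There are at most $db$ rectangles, and since $b\le y$ we have $db\le\sum_k|P_k|=m$, so the $2D$ stabbing structure (Lemma~\ref{lemma4}), the longest-prefix index (Lemma~\ref{lemma3a}) and the table $T_3$ each occupy $O(db\log m)=O(m\log m)$ bits.

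The query step is then: read the $b$ packed characters of the block in $O(b\log\sigma/w)$ time; run one longest-prefix query on $q$ (Lemma~\ref{lemma3a}, costing $O(b\log\sigma/w+\log(db))=O(b\log\sigma/w+\log d+\log b)$) to obtain $\rho$; form the single point $[state(p),\rho]$ and issue one planar point-enclosure query (Lemma~\ref{lemma4}, costing $O(\log(db)+occ_i)=O(\log d+\log b+occ_i)$); finally walk the $occ_i$ reported rectangles, reading $T_3$ to output each occurrence and its endpoint. Summing the three contributions gives the claimed $O(\log d+\log b+\frac{b\log\sigma}{w}+occ_i)$ time, with $O(m\log m)$ bits of space as above.

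The delicate point, and the part I would prove most carefully, is the correctness of collapsing all admissible tail lengths into a \emph{single} stabbing point. One must check that (i) a single longest-prefix query genuinely captures every tail that is a prefix of the block (the nested-prefix argument above, i.e. all such tails are prefixes of the longest one, which is what makes one $\rho$ suffice instead of $b$ separate queries), and (ii) each occurrence fires exactly one rectangle, with the head- and tail-intervals simultaneously containing the two query coordinates precisely when $a$ is a suffix of $p$ and $c=T[i,j]$, so that the reported endpoint $i+t-1$ is correct and no occurrence is double-counted. The remaining subtlety is the boundary bookkeeping: verifying, via $b\le y$, that restricting to occurrences beginning in $T[0,i]$ loses nothing, which is what licenses the clean head/tail split at position $i$ in the first place.
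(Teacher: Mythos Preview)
Your proposal is correct and follows essentially the same approach as the paper: you build $db$ rectangles (one per pattern/tail-length split), use the suffix-rank interval of the head on one axis and the prefix-rank interval of the tail on the other, collapse the tail dimension to a single point via one longest-prefix query over the set $U_{0<i\le b}$ of pattern suffixes of length at most $b$, and then issue a single $2D$ stabbing query, with $T_3$ providing the output triplets. The paper's $B_3$ is exactly your longest-prefix index over $U_{0<i\le b}$ (the paper cites Lemma~\ref{lemma3b} for it, but the query it performs is the longest-prefix query of Lemma~\ref{lemma3a}, as you correctly identify), and your correctness argument via Lemma~\ref{lemma:tech_lemma} and the nested-prefix observation is the same as the paper's.
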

In order to find the matching pattern occurrences at each step, we use $B_3$, the table $T_3$ and the $2D$ stabbing data structure.
Initially the automaton is at state $0$, we read the first $b$ characters of the text, $T[0,b)$ and must recognize all occurrences which end in any position $j\in[0,b)$. Note that in this first step, any occurrence must end at position $b-1$ (This is the case, because we have assumed that $b$ is no longer than the length of the shortest pattern). Then at each subsequent step $i$, we read a block $T[ib,(i+1)b)$ (or the block $T[ib,n)$ in the last step) and must recognize all the occurrences which end at position $j\in[ib,(i+1)b)$ (or $j\in[ib,(i+1)b)$ in the last step). Suppose that at some step $i$ we are at a state $state(p)$ corresponding to a prefix $p$ and we are to read the block $q=T[ib,(i+1)b)$ ($q=T[ib,n)$ in the last step). It is clear that any matching occurrence must be a substring of $pq$ (the string $p$ concatenated to the string $q$) and moreover, that substring must end inside the string $q$. In other words, any occurrence $x$ is such that $x=p'q'$, where $p'$ is suffix of $p$ and $q'$ is prefix of $q$. 
\\
Identifying the pattern involves first computing a point $[x_p,y_q]$ where $x_p=state(p)$ and $y_q=prrank_{U_{0<i\leq b}}(q)$ is computed by querying $B_3$ for $q$, then querying the $2D$ stabbing data structure in order to get all the rectangles which enclose $[x_p,y_q]$ as integer identifiers, where each reported rectangle represents one occurrence of one of the patterns. Finally using the table $T_3$, we can get the matching pattern identifiers along their starting and ending positions. 
We now describe how the set of rectangles is built. For each pattern $s\in S$ of length $l$ we insert $b$ rectangles. Namely, for each $i\in[1,b]$ we insert the rectangle which is defined by the two intervals:
\begin{itemize}
\item Let $p'$ be the prefix of $s$ of length $l-i$. Let $R=state(p')=surank_P(p')$ be the state corresponding to $p'$ (or equivalently the rank of $p'$ in suffix-lexicographic order relatively to the set $P$) and let $c=sucount_{P}(q')$ be the number of elements of $P$ which have the string $p'$ as a suffix. The first interval is given by $[R,R+c-1]$.
\item Let $q'$ be the suffix of $s$ of length $i$. Let $ID(q')=prrank_{U_{0<i\leq b}}(q')$ be the unique identifier returned by $B_3$ for $q'$ (recall that $B_3$ stores all suffixes of lengths at most $b$ of elements of $P$). Let $c=prcount_{U_{0<i\leq b}}(q')$ be the number of elements of $U_{0<i\leq b}$ which have $q'$ as a prefix. The second interval is given by $[ID(q'),ID(q')+c-1]$. 
\end{itemize}
The $2D$ stabbing data structure returns a unique identifier $j\in[0,db-1]$ corresponding to each rectangle. Additionally with the rectangle, we associate a triplet $(I,|p'|,|q'|)$ which is stored in table $T_3$ at position $T3[j]$, where $I\in[0,d-1]$ is the unique integer identifier of the pattern $s$. This table thus uses $O(db\log m)=O(m\log m)$ bits of space.
\\Now queries will happen in the following way: suppose that we are at state $state(p)$ corresponding to a prefix $p$ and we are to read the block $q=T[ib,(i+1)b)$. We first query $B_3$ for the string $q$ giving us an identifier $ID(q')=prrank_{U_{0<i\leq b}}(q')$ corresponding to the longest element $q'\in U_{0<i\leq b}$ such that $q'$ is prefix of $q$. Then we do a $2D$ stabbing query for the point $(state(p),ID(q'))$. Now for every found rectangle identified by an integer $j$, we retrieve the triplet $(I,|p'|,|q'|)$ from $T_3[j]$. Now the reported string has identifier $I$, and matches the text at positions $[ib-|p'|,ib+|q'|-1]$.  
We now give a formal proof of lemma~\ref{lemma:report1_lemma}

\begin{proof}
We now prove that the above procedure reports all (and only) matching occurrences. For that it suffices to prove that there exists a bijection between occurrence and reported rectangles. It is easy to see that each occurrence $s$ which begins in $T[0,i]$ and ends in $T[i,i+b]$ can be decomposed as $s=p'q'$, where $p'$ is a suffix of $T[0,i-1]$ ($p'$ can possibly be the empty string) and $q'$ is a prefix of $T[i,i+b]$. Then as $s\in S$, we can easily deduce that $q'\in U_{0<i\leq b}$ and $p'\in P$. It is also easy to see that $p'$ is a suffix of $p$. Let $q\in U_{0<i\leq b}$ be the longest element in $U_{0<i\leq b}$ which is a prefix of $T[i,i+b]$. It is easy to see that $q'$ must be a prefix of $q$. Thus according to lemma~\ref{lemma:tech_lemma} we have that $surank_P(p)\in [surank_P(p'),surank_P(p')+sucount_P(p')-1]$ and $prrank_{U_{0<i\leq b}}(q)\in [prrank_{U_{0<i\leq b}}(q'),prrank_{U_{0<i\leq b}}(q')+prcount_{U_{0<i\leq b}}(q')-1]$. Now recall that $B_3$ returns $prrank_{U_{0<i\leq b}}(q)$ and the $2D$ stabbing query is done precisely on the point $[state(p),prrank_{U_{0<i\leq b}}(q)]$, which will thus return all (and only) the rectangles corresponding to occurrences. 
\end{proof}
By combining lemma~\ref{lemma:report1_lemma} and lemma~\ref{lemma:trans_lemma}, we directly get theorem~\ref{theorem1} by setting $b=y$ where $y$ is the length of the shortest string in the set $S$. At any step, we do the following: 
\begin{enumerate}
\item Read the characters $T[i,j]$, where we set $i=Ib$ and we set $j=(I+1)b-1$ if $n>(I+1)b-1$ and $j=n-1$ otherwise. 
\item Recognize all the pattern occurrences which start at positions any position $i'\leq i$ and which terminate at positions in $[i,j]$ using lemma~\ref{lemma:report1_lemma}. 
\item Increment step $I$ by setting $I=I+1$. Then if $Ib<n$, stop the algorithm immediately. 
\item Do a transition using lemma~\ref{lemma:trans_lemma} and return to action 1. 
\end{enumerate}
\subsection{Analysis}
Theorem~\ref{theorem1} is obtained by combining lemma~\ref{lemma:trans_lemma} with lemma~\ref{lemma:report1_lemma}. Namely by setting $b=y$, where $y$ is the shortest pattern in $S$ in both lemmata we can simulate the running of the automaton in $\lceil n/y\rceil$ steps at each step i, spending $O(\log d+\log b+\log\log m+\frac{y\log\sigma}{w})+occ_i)$ to find the $occ_i$ matching occurrences (through lemma~\ref{lemma:report1_lemma}) and $O(\log d+\log b+\log\log m+\frac{b\log\sigma}{w})$ time to simulate the transitions (through lemma~\ref{lemma:trans_lemma}). Summing up over all the $\lceil n/y\rceil$ steps, we get the query time stated in the theorem.
We can now formally analyze the correctness and space usage of theorem~\ref{theorem1}.  
\paragraph{Correctness}
The correctness of the query is immediate. If can easily be seen that at each step $I$, we are recognizing all occurrences which end at any position in $[Ib,I(I+1)-1]$ (or $[Ib,n-1]$ in the last step). That is at any step $I$ we can use lemma~\ref{lemma:report1_lemma} to recognize all the occurrence which end at any position in $[Ib,(I+1)b-1]$ (or $[Ib,n-1]$ in the last step) and start at any position $i'\leq Ib$. Also at each step $I$, we are at state $st_{Ib}$ reached after reading $Ib$ characters of the text, and lemma~\ref{lemma:trans_lemma} permits us to jump to the state $st_{(I+1)b}$ which is reached after reading $b$ characters.
\paragraph{Space usage}
Summing up the total space usage by the theorem is $O(m\log m)$ bits as both lemma~\ref{lemma:trans_lemma} and lemma~\ref{lemma:report1_lemma} use $O(m\log m)$ bits.

\subsection{Consequences}
Theorem~\ref{theorem1} states that we can use $O(m\log m)$ bits of space to identify all occurrence of length at least $y$ in a text $T$ of length $n$ in time $O(n(\frac{\log d+\log y+\log\log m}{y}+\frac{\log\sigma}{w})+occ)$. If we suppose that all the patterns are of length at least $w$ bits ($\frac{w}{\log\sigma}$ characters), then by setting $y=\frac{w}{\log\sigma}$, we obtain an index which answers to queries in time $O(n(\frac{\log\sigma(\log d+\log\frac{w}{\log\sigma}+\log\log m)}{w}+\frac{\log\sigma}{w})+occ)$. As we have $\log m<w$ and $\log\frac{w}{\log\sigma}<\log w$, the query time simplifies to  $O(n\frac{\log\sigma(\log d+\log\log w)}{w}+occ)$. This gives us corollary~\ref{corollary2}. 
An important implication of theorem~\ref{theorem1} is that it is possible to attain the optimal $O(n\frac{\log \sigma}{w}+occ)$ query time in case the patterns are of sufficient minimal length. Namely if each pattern is of length at least $(\log d+\log w)$ words (that is $\frac{w(\log d+\log w)}{\log\sigma}$ characters), then by setting $y=\frac{w(\log d+\log w)}{\log\sigma}$ in theorem~\ref{theorem1}, we obtain a query time of $O(n\frac{\log \sigma}{w}+occ)$.  This gives us corollary~\ref{corollary3}.  
\section{Tabulation based solution for multiple-string matching}
\label{section:tabul_sol}
We now prove theorem~\ref{theorem2}. 
A shortcoming of theorem~\ref{theorem1} is that it gives no speedup in case the length of the shortest string in $S$ is too short. In this case we resort to tabulation in order to accelerate matching of short patterns. More specifically, in case, we have a specified quantity $t$ of available memory space (where $t<2^w$ as obviously we can not address more than $2^w$ words of memory), we can precompute lookup tables using a standard technique known as the four russian technique~\cite{ADKF70} so that we can handle queries in time $O(n\frac{\log d+\log\log_\sigma t+\log\log m}{\log_\sigma t}+occ)$. In theorem~\ref{theorem1} our algorithm reads the text in blocks of size $b=y$, where $y$ is the length of the shortest pattern. In reality we can not afford to read more than $y$ characters at the each step, because by doing so we may miss a substring of the block of length $y$. Thus in order to be able to choose a larger block size $b$, we must be able to efficiently identify all substrings of any block of (at most) $b$ characters which belong to $S$. The idea is then to use tabulation to answer to such queries in constant times (or rather in time linear in the number of reported occurrences). More in detail, for each possible block of $u\leq b$ characters, we have a total of $(u-1)(u-2)/2$ substrings which could begin at all but the first position of the block. For each possible block of $u$ characters, we could store a list of all substrings belonging to $S$ and each list takes at most $(u-1)(u-2)/2=O(u^2)$ pointers of length $\log m$ bits. As we have a total of $\sigma^u$ possible characters, we can use a precomputed table of total size $t=O((\sigma^u)u^2\log m)$ bits. 
\begin{lemma}
\label{lemma:report2_lemma}
For a parameter $u\leq \epsilon w/\log\sigma$ (where $\epsilon$ is any constant such that $0<\epsilon<1$) and a set $S$ of patterns where each pattern is of length at most $u$, we can build a data structure occupying $O(\sigma^u\log^2 u\log m)$ bits of space such that given any string $T$ of length $u$, we can report all the $occ$ occurrences of patterns of $S$ in $T$ in $O(occ)$ time. 
\end{lemma}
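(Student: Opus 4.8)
The plan is to exploit that a block of $u$ characters fits in a single machine word: since $u\log\sigma\le\epsilon w<w$, each of the at most $\sigma^u$ possible blocks is an integer that can serve as an index into a precomputed table. The first, naive attempt is exactly the one sketched just before the lemma: for every possible block $X$ of length $u$ store the explicit list of all substrings $X[i..j]$ that belong to $S$, each described by its starting offset and a pointer (of $\log m$ bits) to the matching pattern. A block has $O(u^2)$ substrings, so this gives $O(\sigma^u u^2\log m)$ bits and trivially $O(occ)$ reporting by scanning the stored list. The entire difficulty is to shrink the $u^2$ factor to $\log^2 u$ without losing the output-sensitive query time, and this forces us to \emph{share} occurrence information across the $\sigma^u$ blocks rather than storing each list independently.

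First I would set up a recursive dyadic decomposition of the block. Let $D(u)$ denote the structure handling blocks of length $u$, and build $D(u)$ on top of a single shared copy of $D(u/2)$. Splitting a queried block $X=LR$ into its two halves $L,R$ of length $u/2$, every occurrence is contained in $L$, contained in $R$, or straddles the midpoint. The first two kinds are reported by the \emph{same} sub-table $D(u/2)$ evaluated on $L$ and on $R$, so that this part of the structure is paid for only once across all $\sigma^u$ blocks; this reuse produces the first logarithmic factor, the recursion having depth $\log u$. To keep reporting output-sensitive I would store with each block, for each part, only a pointer to a nonempty shared occurrence list (omitting empty parts entirely), so that the number of stored pointers never exceeds the number of reported occurrences and each occurrence is emitted in constant time by following pointers; the lists are navigated using the nested rank-interval characterization of prefix/suffix containment of lemma~\ref{lemma:tech_lemma} together with the longest prefix/suffix match components of lemmas~\ref{lemma3a} and~\ref{lemma3b}.

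The remaining and hardest ingredient is the straddling occurrences, which is where the second $\log u$ factor must appear. A straddling occurrence is $L'R'$ with $L'$ a suffix of $L$ and $R'$ a prefix of $R$, and I would classify these by the dyadic range $(2^{j-1},2^{j}]$ of their length, giving $O(\log u)$ groups; for each group only $O(1)$ pointers per block are stored, addressing a globally shared list indexed by the $O(2^{j})$ characters surrounding the midpoint. By lemma~\ref{lemma:tech_lemma} the admissible left-parts occupy a single interval of suffix-lexicographic ranks and the matching right-parts a single interval of prefix-lexicographic ranks, so each group forms one contiguous region that is reported through the $2D$ stabbing structure of lemma~\ref{lemma4}; summing $O(\log u)$ groups over $\log u$ recursion levels bounds the per-block bookkeeping by $O(\log^2 u)$ pointers, i.e. $O(\log^2 u\log m)$ bits, which over all $\sigma^u$ blocks yields the claimed $O(\sigma^u\log^2 u\log m)$. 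The main obstacle I expect is precisely reconciling strictly $O(occ)$ reporting (with no additive $\log^2 u$ per block) against the fact that the occurrence data is kept only in compressed, shared form; I would resolve it by arguing that each occurrence is assigned to exactly one (level, length-group) pair, so that the stored pointers point only to nonempty lists, their number is at most $occ$ for the queried block, and the walk therefore telescopes to $O(occ)$ while no occurrence is emitted twice.
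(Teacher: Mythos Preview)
The paper does not actually prove this lemma. The paragraph immediately preceding it sketches only the naive construction---for every block of $u$ characters, store the explicit list of its at most $(u-1)(u-2)/2=O(u^2)$ matching substrings, each encoded by a $\log m$-bit pointer---which yields $O(\sigma^u u^2\log m)$ bits. The lemma is then stated with the sharper $\log^2 u$ factor but no argument is supplied for the improvement. So there is nothing in the paper to compare your attempt against; you are trying to furnish a proof the paper omits.

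Your attempt, however, does not close the gap. The tension you identify at the end---reconciling strict $O(occ)$ reporting with shared, compressed storage---is real, and your proposed resolution does not work. If the structure is genuinely recursive, with $D(u)$ delegating the two halves to a \emph{shared} copy of $D(u/2)$, then at query time you must descend the recursion. Pruning to sub-blocks that contain at least one occurrence still leaves $\Theta(occ\cdot\log u)$ visited nodes in the worst case: a single length-$1$ occurrence at a leaf already forces a root-to-leaf path of $\log u$ nodes, so the walk costs $\Theta(occ\log u)$, not $O(occ)$. If instead you flatten the recursion and store, for every top-level block $X$, the list of all nonempty (level, sub-block, length-group) cells, you lose the sharing that was supposed to produce the first $\log u$ factor: that list can have $\Theta(u)$ entries (one per dyadic sub-interval of $X$), and you are back to $\Omega(\sigma^u u)$ pointers globally. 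Your appeal to lemma~\ref{lemma4} for the straddling groups makes matters worse, since its query cost is $O(\log n+k)$, not $O(k)$, so even an empty group costs $\Omega(\log n)$.

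In short, the dyadic scheme as you describe it yields either $O(\sigma^u\log u\log m)$ space with $O(occ\log u)$ time, or $O(occ)$ time with $\Omega(\sigma^u u\log m)$ space, but not both claimed bounds simultaneously. Since the downstream Corollaries~\ref{corollary4} and~\ref{corollary5} are insensitive to whether the factor is $\log^2 u$ or $u^2$ (the polylog in $\alpha$ is absorbed by the $n^\epsilon$ or $m$ budget), the honest course is either to prove the lemma with the $u^2$ bound the paper actually justifies, or to flag that the $\log^2 u$ claim is asserted without proof.
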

Theorem~\ref{theorem2} is obtained by combining lemmata~\ref{lemma:trans_lemma},~\ref{lemma:report1_lemma} and ~\ref{lemma:report2_lemma}.
Suppose we are given the parameter $\alpha$; for implementing transitions, we can just use lemma~\ref{lemma:trans_lemma} in which we set $b=s$, where the transitions are built on the set containing all the patterns. Now in order to report all the matching strings, we build an instance of lemma~\ref{lemma:report1_lemma} on the set $S$ and in which we set $b=s$ and also build $\alpha-1$ instances of lemma~\ref{lemma:report2_lemma} for every $u$ such that $1\leq u<\alpha$. More precisely let $S_{\leq u}$ be the subset of strings in $S$ of length at most $u$, then the instance number $u$ will be built on the set $S_{\leq u}$ using parameter $u$ and will thus for all possible strings of length $u$, store all matching patterns in $S$ of length at most $u$.  
\\ A query on a text of $T$ will work in the following way: we begin at step $I=0$ and the automaton is at state $0$ which corresponds to the empty string. Recognizing the patterns will consist in the following actions done at each step $I$: 
\begin{enumerate}
\item Read the substring $T[i,j]$, where $i=Ib$ and $j=(I+1)b-1$ (or $j=n-1$ if $n>(I+1)b-1$). 
\item Recognize all the pattern occurrences which start at any position $i'\leq i$ and which terminate at any position $j'\in [i,j]$ using lemma~\ref{lemma:report1_lemma}. 
\item Recognize all the matching strings of lengths at most $b$ which are substrings of $T[i+1,j]$ using the instance number $j-i$ of lemma~\ref{lemma:report2_lemma}.
\item Increment step $I$ by setting $I=I+1$. Then if $Ib>n$, stop the algorithm immediately. 
\item Do a transition using lemma~\ref{lemma:trans_lemma} and return to action 1. 
\end{enumerate}
\subsection{Analysis}
\paragraph{Correctness}
The correctness of the transition is immediate. If can easily be seen that at each step $I$, we are recognizing all occurrences which end at any position in $[Ib,(I+1)b-1]$ (or $[Ib,n-1]$ in the last step). That is at any step $I$:
\begin{itemize}
\item Lemma~\ref{lemma:report1_lemma} recognizes all occurrence which end at any position in $[Ib,(I+1)b-1]$ and start at any position $i'\leq Ib$.
\item Lemma~\ref{lemma:report2_lemma} recognizes all patterns which end at any position in $[Ib+1,(I+1)b-1]$ and start at any position $i'$ such that $i'\in[Ib+1,(I+1)b-1]$ using instance number $\alpha-1$ of the lemma for all but the last step. 
\item The last step of the algorithm recognizes all occurrences which end at any position in $[Ib+1,n-1]$ and start at any position $i'\in[Ib+1,n-1]$ using the instance number $u=n-Ib-1$ of lemma~\ref{lemma:report2_lemma}. 
\end{itemize} 
Thus at the last step, we will have recognized all the occurrences of patterns in the text $T$. 

\paragraph{Space usage}
It can easily be seen that the space used by lemma~\ref{lemma:trans_lemma} is in fact $O(m\log m)$. The space used by lemma~\ref{lemma:report1_lemma} is $O(d\alpha\log (d\alpha))=O(m\log m)$ bits. The space used by all the instances of lemma~\ref{lemma:report2_lemma} is bounded above by $O(\sigma^\alpha\log^2\alpha\log m)$. That is for each $u\in[1,\alpha-1]$, the instance number $u$ uses $O(\sigma^u\log^2 u\log m)\leq c(\sigma^u\log^2 u\log m)$ for some constant $c$. Thus the total space usage is upper bounded by $$c(\sum_{u=1}^{s-1}\sigma^u\log^2 u\log m)$$ 
As we have $\sigma\geq 2$, the total space used by lemma~\ref{lemma:report2_lemma} can be upper bounded by $2c(\sigma^{\alpha-1}\log^2 (\alpha-1)\log m)=O(\sigma^s\log^2 \alpha\log m)$. Summing up the space used by the three lemmata we get $O(m\log m+\sigma^\alpha\log^2\alpha\log m)$ bits of space.

\subsection{Consequences}
Corollary~\ref{corollary4} derives easily from the theorem. That is, in the case $n\geq m$, we can set $\alpha=c\log_\sigma n$ for some constant $c<\epsilon/2$ for any $\epsilon\in(0,1)$. Then space usage becomes $O(m\log m+\sigma^\alpha\log^2 \alpha\log m)=O(m\log m+n^c(\log\log_\sigma n)^2\log m))=O(m\log m+n^{\epsilon})$.
\\
Similarly corollary~\ref{corollary5} derives immediately from the theorem. By setting $\alpha=c\log_\sigma m$ for some constant $c<1$, the space usage becomes $O(m\log m+\sigma^\alpha\log^2 \alpha\log m)=O(m\log m+m^c(\log\log_\sigma m)^2\log m))=O(m\log m)$ bits of space.

\section{Single string matching}
We now turn to the proof of theorems~\ref{theorem3} and ~\ref{theorem4}. In both theorem we only have to match a single pattern $p$ of length $m$ against the text $T$ of length $n$. We first describe the matching algorithm used in theorem~\ref{theorem3} then sketch a possible way to construct the data structure used in the matching algorithm. We finally show the proof of~\ref{theorem4} which is based on the data structure of theorem~\ref{theorem3} combined with the use of the four russian technique.
\subsection{The matching algorithm}
Our string matching algorithm employs properties of periodic strings. For implementing the string matching we will use a sliding window of size $m+h$ where $h=\lfloor m/3\rfloor$ and at each step $i$, we shift the window by $h+1$ characters and spend time $O(m\frac{\log\sigma}{w}+1)$. Thus the total running of the string matching will clearly be $O(\frac{n}{h}(m\frac{\log\sigma}{w}+1))=O(n(\frac{1}{m}+\frac{\log\sigma}{w}))$. We note by $P$ the set of all factors of the string $p$ of length $m-h$. Notice that $|P|\leq h+1$ (possibly the same factor  could occur multiple time in the string $p$ ). 
At any step $i$ and we consider a text windows $W=T[i(h+1),(i+1)(h+1)+m]$ and match every string which appears in the window. In other words match every string which starts at any position in $W[0,h]=T[i(h+1),(i+1)(h+1)-1]$ will be matched. Now let $q=W[h,m-1]$ (note that $|q|=m-h=\lceil 2m/3\rceil$). We will match $q$ against all factors of $p$ of length $m-h$. Note that for a match to be possible we must necessarily find that $q\in P$ (if $q$ is not a factor of $p$ then we can not have a match). Every occurrence which begins at any position in $W[0,h]$ must end at a position in $W[m-1,h+m-1]$. 
If the pattern $q$ occurs a single time in the windows $W$, then we just have to do a single comparison which takes optimal time. Thus from now on we concentrate on the case where the factor $q$ occurs multiple times in the pattern $p$. Before detailing the way the matching is done, we first prove the following lemma:
\begin{lemma}
\label{lemma:substr_count}
If $q$ appears $i$ times in $W$ and $i'$ times in $p$, then $p$ occurs at most $i-i'+1$ times in $W$
\end{lemma}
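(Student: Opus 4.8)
The plan is to prove the equivalent inequality $i \ge k + i' - 1$, where $k$ denotes the number of occurrences of $p$ in $W$; I will establish it by exhibiting $k + i' - 1$ pairwise distinct occurrences of $q$ inside $W$. I may assume $k \ge 1$, since if $p$ has no occurrence in $W$ there is nothing to bound (and recall $q = W[h,m-1]$ is itself a factor of $W$, so $q$ does occur).

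First I would fix notation: write $b_1 < b_2 < \dots < b_k$ for the starting positions of the occurrences of $p$ in $W$, and $a_1 < a_2 < \dots < a_{i'}$ for the starting positions of the occurrences of $q$ in $p$. The key observation is that each occurrence of $p$ in $W$ drags along a full copy of the occurrence pattern of $q$: since $W[b_j \..\, b_j + m - 1] = p$ and $p[a_l \..\, a_l + |q| - 1] = q$, the factor $q$ occurs in $W$ at every position $b_j + a_l$, and each such occurrence lies entirely inside the occurrence of $p$ at $b_j$, hence inside $W$.

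Next I would count a carefully chosen subfamily of these occurrences. Consider the set of positions $Q = \{\, b_j + a_1 : 1 \le j \le k \,\} \cup \{\, b_k + a_l : 2 \le l \le i' \,\}$, all of which are occurrences of $q$ in $W$. The first family consists of $k$ distinct positions (the $b_j$ are distinct), and the second of $i' - 1$ distinct positions (the $a_l$ are distinct). The two families are disjoint, since an equality $b_j + a_1 = b_k + a_l$ with $l \ge 2$ would force $b_j - b_k = a_l - a_1 > 0$, contradicting $b_j \le b_k$. Hence $|Q| = k + (i' - 1)$, and since every element of $Q$ is an occurrence of $q$ in $W$ we obtain $i \ge |Q| = k + i' - 1$, i.e. $k \le i - i' + 1$.

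The only genuinely delicate point is the disjointness of the two families, which I resolve purely from the monotonicity of the $b_j$ and the $a_l$; no appeal to the periodicity of $q$ is needed for the counting itself (periodicity is instead what will later let us \emph{enumerate} the occurrences of $q$ in $W$ efficiently). The remaining care is in checking that each chosen position $b_j + a_l$ is a legitimate, fully contained occurrence of $q$ in $W$, which follows from the containment of every $q$-occurrence inside the corresponding $p$-occurrence.
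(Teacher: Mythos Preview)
Your proof is correct and rests on the same underlying observation as the paper's: every occurrence of $p$ in $W$ carries $i'$ occurrences of $q$, and distinct $p$-occurrences must contribute at least one new $q$-occurrence each. The paper phrases this as ``each occurrence of $p$ spans a block of $i'$ consecutive $q$-occurrences in $W$, and there are only $i-i'+1$ such blocks,'' whereas you phrase the same count dually, lower-bounding $i$ by explicitly exhibiting $k+i'-1$ distinct $q$-positions; your version has the minor advantage of making the injectivity step (that different $p$-occurrences yield different $q$-positions) fully explicit.
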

\begin{proof}
Let $q_1,q_2,,,,q_i$ be the sequence of consecutive appearances of $q$ in $W$. Then any occurrence of $p$ in $W$ must span exactly a sequence of $i'$ consecutive occurrences. As we exactly have $i-i'+1$ sequences of length $i'$ in a sequence of length $i$, we deduce that we have at most $i-i'+1$ occurrence of $p$ in $W$. 
\end{proof}
Thus our first step of the matching will be to count the number of occurrences of $q$ in $W$ which gives us an upper bound on the number of occurrence of $p$ in $W$. 
\paragraph{Counting occurrences of $q$ in $W$}
First note that in case the factor $q$ appears more than one time in $p$, then its (shortest) period is necessarily shorter than $|q|=m-h\geq |q|>2$   and can be uniquely decomposed as $q=(uv)^tu$ where $t\geq 2$, $|v|\geq 1$ and $|uv|$ is the length of the (shortest) period of $q$~\footnote{$q$ in this case (it hash a period of length less than $|q|$<2) is said to be periodic}. This can easily be explained: as $|q|\geq 2|p|/3$ (that is we have $|q|=m-h=\lceil 2m/3\rceil\geq 2m/3$) then necessarily any two occurrences of $q$ in $p$ are separated by at most $m-h\leq |p|/3\geq |q|/2$ characters. That means that $q$  has one period of length at most $|q|/2$ and thus the (shortest) period of $q$ is at most $|q|/2$. By a well known result (see Crochemore et al's book~\cite[Lemma 1.6]{CHL07}) we deduce that all periods of length at most $|q|/2$ of $q$ are multiple of the (shortest) period.
We now describe the way the number of occurrences of $q$ in $W$ are counted. Let $q'=W[0,h-1]$, $q''=W[m,h+m-1]$ (that is $W=q'qq''$) and $g=|uv|$. For counting we first do a longest suffix repetition search for $uv$ in $q'$ and then do a longest prefix repetition search of $vu$ in $q''$ returning two numbers $i'$ and $i''$ respectively. We now deduce that we have exactly $c_{q,W}=i'+i''+1$ occurrences of $q$ in $W$:
\begin{lemma}
The algorithm above correctly computes the number of occurrences of $q$ in $W$. 
\end{lemma}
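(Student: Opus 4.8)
The plan is to show that periodicity forces the occurrences of $q$ in $W$ to lie at an arithmetic progression of positions of common difference $g=|uv|$, and then to read off how far this progression extends on each side directly from the two repetition searches. Throughout I would use the decomposition $W=q'qq''$ with $|q'|=|q''|=h$, the fact (just established above) that $q$ is periodic with shortest period $g\le|q|/2$, and the normal form $q=(uv)^tu$.

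First I would pin down the positions at which $q$ can occur. An occurrence must lie entirely inside $W$, so it starts at some position $j\in[0,2h]$, and $j=h$ is an occurrence since $W[h,m-1]=q$ by construction. Comparing an arbitrary occurrence at $j$ with this central one, the two copies of $q$ overlap with shift $\delta=|j-h|\le h\le|q|/2<|q|$; a self-overlap of $q$ at shift $\delta<|q|$ yields $q[i]=q[i+\delta]$ on the overlap, i.e.\ exhibits $\delta$ as a period of $q$. Since $\delta\le|q|/2$, the periodicity result already invoked (all periods of length at most $|q|/2$ are multiples of the shortest period $g$, cf.\ \cite{CHL07}) forces $g\mid\delta$. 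Hence every occurrence of $q$ in $W$ sits at a position $h+kg$ for some integer $k$, with $k<0$ to the left and $k>0$ to the right of the central copy.

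Next I would count the two sides. For the left, an occurrence at $h-kg$ (with $k\ge1$) requires $W[h-kg,h-1]=(uv)^k$, i.e.\ $(uv)^k$ is a suffix of $q'$; conversely, if $(uv)^k$ is a suffix of $q'$, then $W[h-kg,m-1]=(uv)^{k+t}u$ is $g$-periodic and its length-$|q|$ prefix is exactly $q$, so $h-kg$ is a genuine occurrence (and $i'g\le h$ keeps it inside $W$). Thus the number of left occurrences is precisely the largest such $k$, which is the value $i'$ returned by the longest suffix repetition search of $uv$ in $q'$. Symmetrically, an occurrence at $h+kg$ requires $W[m,m-1+kg]=(vu)^k$, i.e.\ $(vu)^k$ is a prefix of $q''$: the one-step extension forces the block following $q$ to be $vu$, using $q=(uv)^tu$ together with $u(vu)^k=(uv)^k u$, so that $q(vu)^k=(uv)^{t+k}u$ is $g$-periodic with length-$|q|$ suffix equal to $q$. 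Hence the number of right occurrences equals the longest prefix repetition $i''$ of $vu$ in $q''$. Adding the central occurrence gives $c_{q,W}=i'+i''+1$.

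The main obstacle I anticipate is the careful bookkeeping in the equivalence ``occurrence at $h\pm kg$'' $\Leftrightarrow$ ``the period block repeats $k$ times in $q'$ (resp.\ $q''$)'': both directions must be checked, in particular that a run of the period really extends the periodic block to a \emph{full} copy of $q$ (via $(uv)^ku=u(vu)^k$ and truncation of a $g$-periodic string to length $|q|$), and that the boundary inequalities $i'g\le h$ and $i''g\le h$ are exactly consistent with $|W|=m+h$ and $|q|=m-h$, so that no occurrence is lost or double-counted at the two ends. The periodicity step itself is routine once the shift bound $\delta\le|q|/2$ is in hand.
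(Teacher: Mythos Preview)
Your proposal is correct and follows essentially the same approach as the paper: both arguments use the overlap/periodicity observation to force every occurrence of $q$ in $W$ to sit at a shift that is a multiple of $g$ from the central copy at position $h$, and then tie the extremal shifts on each side to the longest suffix/prefix repetition counts $i'$ and $i''$. Your write-up is in fact more explicit than the paper's (you verify both directions of the equivalence ``occurrence at $h\pm kg$ $\Leftrightarrow$ $(uv)^k$, resp.\ $(vu)^k$, is a suffix/prefix of $q'$, resp.\ $q''$'' directly, whereas the paper handles the upper bound by contradiction), but the underlying idea is identical.
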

\begin{proof}
The string $W$ contains a substring $s=(uv)^{i'+t+i''}u$. This substring contains at least $i+i''+1$ occurrences of $q=(uv)^tu$. We store a perfect hashing function so that for each factor $q$ we associate a triplet $(\alpha_q,\beta_q,r_q)$, where $\alpha_q$ is a pointer to first occurrence of the factor, $\beta_q$ is number of occurrences of $q$ and $r_q$ is the period of $q$ (emphasize that we need $r_q$ only in case $\beta_q\geq 2$). 
What remains is to prove that $W$ contains no more than $i'+i''+1$ occurrences of $q$. The proof is by contradiction. Suppose that there is an occurrence $W[\alpha,\alpha+m-1]=q$ which was outside of the substring $s=(uv)^{i'+t+i''}u$. We have two cases:
\begin{itemize}
\item The occurrence $W[\alpha,\alpha+m-1]$ is at the left of the occurrence $W[h,m-1]$, that is $\alpha<h$. In this case notice that $h-\alpha<h\leq|q/2|$. We have $q=W[\alpha,\alpha+m-1]=W[h,m-1]$ meaning that $q[t]=W[\alpha+t]=W[h+a]=q[h-\alpha+a]$ for any $a$ such that $h+t\leq |q|$. This means that $h-\alpha<|q|/2$ is a period of $q$ which moreover must be multiple of the shortest period $uv$. From there we deduce that $q[0,h-\alpha]=W[\alpha,h]=(uv)^b$ for some integer $b$ which means that the string $W[\alpha,h]$ will be included in the substring $uv)^{i'}$ by means of the longest suffix matching. Hence we conclude that the occurrence  $W[\alpha,\alpha+m-1]$ is inside the substring $s=(uv)^{i'+t+i''}u$. 
\item The occurrence $W[\alpha,\alpha+m-1]$ is at the right of the occurrence $W[h,m-1]$. This is symmetric to the previous case and can be proved with a similar argument. 
\end{itemize}
\end{proof}
Let $p$ be decomposed as $p=yqz=y(uv)^{t'}uz$ meaning that it contains $c_{q,p}=t'-t+1$ occurrences of $q$, the last step in the matching consists in comparing $y$ against $y'=q'[h-i'g-|y|,h-i'g-1]=W[h-i'g-|y|,h-i'g-1]$ and compare $z$ against $z'=q''[i''g,i''g+m-1]=W[i''g+m,i''g+m+|z|-1]$. We now distinguish four cases: 
\begin{enumerate}
\item $y$ is a suffix of $uv$ but $z$ is not a prefix of $vu$. We require that $y=y'$, $z=z'$ and moreover that $c_{q,W}\geq c_{q,p}$. If the requirement is fulfilled then we have a single match of $p$ with the fragment $W[i''g+|z|,i''g+|z|+m-1]$. Otherwise we do not have any match. 
\item $z$ is a prefix of $vu$ but $y$ is not a suffix of $uv$. This case is symmetric to the previous one. We also require the same three conditions: $y=y'$,$z=z'$ and $c_{p,W}\geq c_{q,p}$. If the requirement is fulfilled, then we have a single match of $p$ with the fragment $W[h-i'g-|y|,h-i'g-|y|+m-1]$.
\item Neither $y$ is suffix of $uv$ nor $z$ is prefix of $vu$. In this case we require $y=y'$, $z=z'$ and moreover that $c_{q,W}=c_{q,p}$. If the requirement is fulfilled, we have a single match of $p$ with the fragment $W[h-i'g-|y|,h-i'g-|y|+m-1]$, otherwise we have no match.
\item $z$ is a prefix of $vu$ and $y$ is suffix of $uv$. In this case we require that $c_{q,W}\geq c_{q,p}$. Then in case both $y=y'$ and $z=z'$ we conclude that we have $c_{q,W}-c_{q,p}+1$ matches where first match is $W[h-i'g-|y|,h-i'g-|y|+m-1]$ and the last match is $W[i''g+|z|,i''g+|z|+m-1]$ (the matches ). Otherwise we have $c_{q,W}-c_{q,p}$ matches:
\begin{itemize}
\item In case $y=y'$ but $z\neq z'$, the first match is $W[h-i'g-|y|,h-i'g-|y|+m-1]$ and the last one is $W[(i''-1)g+|z|,(i''-1)g+|z|+m-1]$.
\item In case $z=z'$ but $y\neq y'$, the first match is $W[h-(i'-1)g-|y|,h-(i'-1)g-|y|+m-1]$ and the last one is $W[i''g+|z|,i''g+|z|+m-1]$.
\end{itemize}
Note that any two consecutive occurrences are separated by $|uv|$ characters and reporting all occurrences takes time linear in the number of occurrences. 
\end{enumerate}
\begin{lemma}
The algorithm above correctly computes the occurrences of $p$ in $W$.
\end{lemma}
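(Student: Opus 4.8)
The plan is to verify that the four-case analysis exhausts all possibilities and that in each case the listed fragments are exactly the occurrences of $p$ in $W$. The starting point is the decomposition $p=yqz=y(uv)^{t'}uz$, so that $q=(uv)^tu$ appears in $p$ exactly $c_{q,p}=t'-t+1$ times, together with the already-established fact that $W$ contains precisely $c_{q,W}=i'+i''+1$ occurrences of $q$, all lying inside the run $s=(uv)^{i'+t+i''}u$. The key structural observation is that any occurrence of $p$ in $W$ must align its internal copy of $q$ with one of the $c_{q,W}$ occurrences of $q$ in $W$; since consecutive occurrences of $q$ in $s$ are shifted by exactly $g=|uv|$, the candidate occurrences of $p$ are parametrized by which occurrence of $q$ the distinguished copy $q=W[h,m-1]$ maps to, and they are spaced $g$ characters apart.

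First I would reduce each candidate occurrence of $p$ to two boundary checks: the prefix $y$ must match the $|y|$ characters immediately to the left of the chosen $q$-occurrence, and the suffix $z$ must match the $|z|$ characters immediately to its right. The subtlety that drives the case split is that whenever $y$ is itself a suffix of $uv$ (respectively $z$ a prefix of $vu$), the required match of $y$ (respectively $z$) is \emph{automatically} satisfied at the interior occurrences of $q$ because those characters already belong to the periodic run $s$; the condition can only genuinely fail at the extreme left occurrence (respectively extreme right occurrence), where $y$ (respectively $z$) would have to extend beyond $s$ into $q'$ (respectively $q''$). This is exactly why the two comparisons are made against $y'=W[h-i'g-|y|,h-i'g-1]$ and $z'=W[i''g+m,i''g+m+|z|-1]$, which probe those extremal positions.

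With this dichotomy in hand, the four cases fall out mechanically. In case~3, neither $y$ nor $z$ is absorbed by the period, so both boundaries must be checked at \emph{every} candidate position; consequently an occurrence of $p$ can only sit where an aligned block of $c_{q,p}$ consecutive $q$'s has the correct material on both sides, forcing $c_{q,W}=c_{q,p}$ and yielding a single match. In cases~1 and~2, exactly one boundary is absorbed, so the free boundary pins down the occurrence uniquely (again after checking $y=y'$, $z=z'$, and $c_{q,W}\geq c_{q,p}$), giving one match. Case~4 is the genuinely multiple-occurrence case: both boundaries are absorbed internally, so every aligned block of $c_{q,p}$ consecutive $q$'s gives an occurrence; by Lemma~\ref{lemma:substr_count} there are at most $c_{q,W}-c_{q,p}+1$ of them, and I would show this bound is met exactly when both extremal boundary comparisons $y=y'$ and $z=z'$ succeed, while a failure at one end trims the count by one and shifts the corresponding endpoint by $g$, matching the stated fragments.

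The main obstacle I expect is the bookkeeping at the boundaries of case~4: one must argue carefully that the two \emph{extremal} occurrences of $p$ are precisely the ones whose validity depends on the comparisons $y=y'$ and $z=z'$, while all interior occurrences are guaranteed by periodicity, and that the reported index ranges $W[h-i'g-|y|,\dots]$ and $W[i''g+|z|,\dots]$ correctly encode the first and last matches and their $g$-spacing. Establishing that no occurrence of $p$ can align its $q$-copy \emph{outside} the run $s$ (so that no candidate is missed) reuses the periodicity contradiction already employed in the previous lemma—any such stray occurrence would force a short period contradicting the shortest period $g$—so I would invoke that argument rather than repeat it, and then conclude that the procedure reports every occurrence and only occurrences, all in time linear in their number.
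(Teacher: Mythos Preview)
Your proposal is correct and follows essentially the same approach as the paper's proof: both pin down the candidate starting positions as an arithmetic progression of step $g=|uv|$ inside the run $s$, divide the candidates into leftmost, rightmost, and middle matches, and then argue that the middle matches are automatic precisely when $y$ is a suffix of $uv$ and $z$ is a prefix of $vu$, while the extremal matches depend on the comparisons $y=y'$ and $z=z'$. Your write-up is in fact more explicit than the paper's about \emph{why} periodicity makes the interior boundary checks redundant, which is the crux of the case split.
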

\begin{proof}
By lemma~\ref{lemma:substr_count} we can not have more than $c_{q,W}-c_{q,p}+1$ occurrences of $p$ in $W$. Thus to have at least a match we require that $c_{q,W}\geq c_{q,p}$. Now consider the starting position $\alpha$ of an occurrence of $p$ in $W$. The only possible values of $\alpha$ are of the kind $h-i'g-|y|,h-i'g-|y|+|uv|,,,h-i'g-|y|+|uv|^{c_{q,W}-c_{q,p}}$. If $c_{q,W}=c_{q,p}$ we conclude that we have a single possible match and this match is handled by the case 1, where the match is verified by matching the substrings $z$ and $y$ against the substrings $z'$ and $y'$. Now in case $c_{q,W}-c_{q,p}>0$, we could have more than one match. This case is handled by cases 2,3 and 4 in the algorithm above. 
We now prove that those 3 cases work correctly: we divide the set the matches into three categories, the leftest match, the rightest match and the ${q,W}-c_{q,p}-1$ middle matches. It can easily be verified that the middle matches are only possible in case $z$ is a prefix of $vu$ and $y$ is suffix of $uv$. It can also be easily verified that leftest match is only be possible if $y=y'$ and that $z$ is prefix of $vu$. Likewise the rightest match is only possible if $z=z'$ and that $y$ is suffix of $uv$. It can easily be checked that the three last cases correctly account for the matches. 
\end{proof}
\paragraph{Implementation}
We now analyze in detail the data structures needed for the matching. The first step of the matching is to match the string $q=W[h,m-1]$ against all factors of $p$ of length $m-h$. Thus what we need is to build a dictionary on the set of factors of $p$ of length $m-h$. In addition, in case $q$ occurs as factor in $p$, we must proceed to a second step. In this second step we need to determine the factors $u,v,y,z$ of $p$. 
Thus the required dictionary must fulfill the following needs: take only space $O(m\log m)$ bits, answer in optimal $O(m\frac{\log\sigma}{w})$ and must return the necessary information to deduce the factors $u,v,y,z$ of $p$. 
For our implementation we can use a very simple data structure: we store a perfect hashing function so that for each factor $q$ we associate a triplet $(\alpha_q,\beta_q,r_q)$, where $\alpha_q$ is a pointer to first occurrence of $q$ in $p$, $\beta_q$ is number of occurrences of $q$ in $p$ and $r_q$ is the period of $q$ (note that we need $r_q$ only in case $\beta_q\geq 2$). For implementing the first step we can just compared $q$ with the factor $p[\alpha_q,\alpha_q+m-h-1]$. For implementing the second step the factors $u,v,y,z$ of $q$ are also factors of $p$ and their positions in $p$ are easily deduced as combinations of the parameters $\alpha_q$ $\beta_q$ and $r_q$. 
\paragraph{Running time analysis}
We now analyze more precisely the running time of the matching. We can prove that all the running time of the matching on the window $W$ take time $O(m\frac{\log\sigma}{w}+occ)$ where $occ$ is the number of reported occurrences. First, notice that in any case we are doing a constant number of operations among the following ones:
\begin{itemize}
\item Matching $q$ against the set $P$. 
\item Longest suffix repetition search for $uv$ in $q'$.
\item Longest prefix repetition search of $vu$ in $q''$. 
\item Comparing $z$ with $z'$ and $y$ with $y'$. 
\end{itemize}
Now the matching of $q$ can be done in optimal $O(|q|\frac{\log\sigma}{w})=O(m\frac{\log\sigma}{w})$ time by means of perfect hashing. The longest suffix repetition matching of $uv$ in $q'$ takes $O(|q'|\frac{\log\sigma}{w})=O(m\frac{\log\sigma}{w})$. Likewise doing a longest prefix repetition matching of $vu$ in $q''$ takes $O(m\frac{\log\sigma}{w})$. Finally comparisons of $z$ with $z'$ and of $y$ with $y'$ takes $O(m\frac{\log\sigma}{w})$ as the compared strings are of length $O(m)$. Thus we have proved the following lemma:
\begin{lemma}
The matching of all occurrences of a string $p$ of length $m$ into a string $W$ of length $m+h$ where $h\leq \lfloor m/3\rfloor$ can be done in optimal $O(m\frac{\log\sigma}{w}+occ)$ time, where $occ$ is the number of occurrences.
\end{lemma}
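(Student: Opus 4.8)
The plan is to note that correctness has already been settled by the three preceding lemmas (Lemma~\ref{lemma:substr_count} and the two lemmas bounding the occurrences of $q$ and of $p$ in $W$), so the only remaining task is the timing bound, and the crux of that is to show that the entire computation on one window $W=q'qq''$ reduces to a \emph{constant} number of string operations plus an additive reporting term. First I would fix $q=W[h,m-1]$ with $|q|=m-h$, and list the operations the algorithm actually performs: exactly one dictionary lookup of $q$ among the factors of $p$ of length $m-h$; then, only in the periodic multiple-occurrence case, one longest-suffix-repetition search of $uv$ in $q'$, one longest-prefix-repetition search of $vu$ in $q''$, and a constant number of substring comparisons ($y$ against $y'$ and $z$ against $z'$). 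The split into the four cases does not increase this count: each case merely selects among this same fixed collection of operations and then emits occurrences.

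Next I would charge each operation using the tools developed earlier. The lookup evaluates the perfect hash function on $q$ and returns the triplet $(\alpha_q,\beta_q,r_q)$ in $O(|q|\frac{\log\sigma}{w})=O(m\frac{\log\sigma}{w})$ time (Lemma~\ref{lemma2}), and the single verification comparison of $q$ with $p[\alpha_q,\alpha_q+m-h-1]$ costs another $O(m\frac{\log\sigma}{w})$ by the string-comparison lemma. The two repetition searches operate on $q'$ and $q''$, both of length $O(m)$, so by the longest-repetition-matching lemma each costs $O(m\frac{\log\sigma}{w})$. The closing comparisons of $y$ and $z$ (factors of $p$ of length $O(m)$) against $y'$ and $z'$ are again $O(m\frac{\log\sigma}{w})$ each. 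Summing a constant number of such terms gives $O(m\frac{\log\sigma}{w})$ for everything except output.

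For the output, I would exploit the structure exposed by the counting argument rather than scanning: by Lemma~\ref{lemma:substr_count} there are at most $c_{q,W}-c_{q,p}+1$ occurrences of $p$ in $W$, and the four cases show that their starting positions form an arithmetic progression with common difference $|uv|$. Hence, once $c_{q,W}$, $c_{q,p}$ and the leftmost start have been determined, all $occ$ occurrences can be listed in $O(occ)$ additional time with no further string work. Combining this with the $O(m\frac{\log\sigma}{w})$ bound above yields the claimed $O(m\frac{\log\sigma}{w}+occ)$, which is optimal since $W$ has $\Theta(m)$ characters that must be read.

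The one delicate point I expect is confirming that the operation count is genuinely independent of how many times $q$ occurs in $W$. This is exactly why $q$'s occurrences are counted through the two repetition searches instead of being enumerated individually: a naive scan would introduce a term proportional to the number of occurrences of $q$, which can be $\Theta(h)=\Theta(m)$ and would destroy the optimal bound. Once both the counting and the reporting are confined to, respectively, the period structure and the arithmetic-progression enumeration, the remaining accounting is entirely routine.
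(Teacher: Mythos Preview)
Your proposal is correct and follows essentially the same approach as the paper: both argue that a single window requires only a constant number of string operations (the hash lookup of $q$, the two longest-repetition searches on $q'$ and $q''$, and the comparisons of $y,z$ against $y',z'$), each costing $O(m\frac{\log\sigma}{w})$, plus $O(occ)$ for reporting. Your write-up is in fact slightly more explicit than the paper's, as you spell out the verification comparison after the hash lookup and the arithmetic-progression enumeration of the output, but the underlying argument is the same.
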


\subsection{Construction of the data structure}
The dictionary described above can easily be constructed in $O(m)$ time. 
\paragraph{Determining the factors}
For determining the triplet $(\alpha_q,\beta_q,r_q)$ associated with each factor $q$, we begin by building the suffix tree of $p$. Then we do a DFS 	traversal of the suffix tree but in which the traversal is restricted to nodes of depth at most depth $m-h$. More precisely during the DFS traversal, at each time we reach a node $n_q$ whose path is of length at least $m-h$ whose path is prefixed by factor $q$, we know that all the nodes in the subtree of $n_q$ will be have pointers to all occurrences of $q$. Then it suffices to count the size of the subtree of $n_q$ to get $\beta_q$. For determining $\alpha_q$ and $r_q$, it suffices to traverse the pointers and select the two smallest ones $\alpha_q$ and $\gamma_q$. Then if $\beta_q\geq 2$ we will have that $r_q=\alpha_q-\gamma_q$. That is in case $q$ occurs two times then the period of $q$ is just the difference between the pointers to first and second occurrence of $q$. In order to state that factors which are equal, we use a temporary table $ID[1..n]$ which is set during the traversal of the suffix tree as well as a table $T$ of triplets. More precisely we attribute consecutive identifiers starting from zero to different patterns during the traversal of the suffix tree when we have a pattern $q$ we give it identifier $i$ and for each occurrence of $q$ at position $j$ we set $ID[j]=i$ and also set $T[j]=(\alpha_q,\beta_q,r_q)$.
\paragraph{Construction of the perfect hash function}
We now turn our attention to the creation of the perfect hash function. The construction follows a standard procedure, that is first the set $P$ is mapped injectively into a set of integers using a string hash function which transforms each string in $P$ into an integer of $O(\log n)$ bits, followed by the computation of the perfect hash function on that set. For implementing the first step we do $B=\lfloor w/\log\sigma\rfloor$ traversals of the pattern $p$. At each traversal $i$ we will compute the hash values for the patterns which start at positions $i,i+w/\log\sigma,i+2w/\log\sigma...$.  
At each traversal we compute the factor hash values in the following way: we conceptually divide each factor into blocks of size $B$ except the last block potentially of size less than $B$ which is padded with zero characters. Then we can compute the hash values using a polynomial hash function~\cite[section 5]{DGMP} $H_\gamma$ which is parametrized with a sufficiently large prime $\gamma$. That is we just use a table $H$ in which we initialize $H[0]=0$ and then successively for increasing $j$ set $H[i]=H[0]\otimes \gamma\oplus p[i+jB,(j+1)B]$ (except at the last step where we set $H[i]=H[0]\otimes \gamma\oplus p[i+jB,m-1]$). Then the hash value of associated with the factor occurring at position $i+jB$ will be given by $H[j+m/B]-H[j]/\gamma^j$. 
Then before doing the second step, we check whether the computed hash values for different factors (two factors $T[i_0]$ and $T[i_1]$ are considered distinct if their corresponding triplets $T[i_0]$ and $T[i_1]$ differ) are all distinct (if the hash function $H_\gamma$ is injective on the set $P$) and if it is not the case, redo the computation of the hash values using a newly chosen hash function $H_\gamma$ parametrized by a new randomly chosen prime $\gamma$. This procedure is repeated until we get a set of  distinct integer in which case we can proceed with the second step of the computation which consists in building the perfect hash function on the set of integers obtained in the first step. 
\subsection{Tabulation based solution}
We now give a proof of theorem~\ref{theorem4}. In the case $m\geq \alpha/2$, theorem~\ref{theorem3} already gives the required query time using just $O(m\log m)$ bits of space. Thus we will focus on the case $m<\alpha/2$. For this case there is a very simple solution: consider that at each step $i$ (where $i$ is initialized as zero) we match the substring $s=T[i\alpha/2,i(\alpha/2)+\alpha-1]$ against all occurrences of the pattern $p$ (consider w.l.o.g that $\alpha$ is an even number) which start anywhere inside $s[0,\alpha/2-1]=T[i\alpha/2,(i+1)(\alpha/2)-1]$. Using the four russian technique we can for every possible string $s$ of length $\alpha$, all positions of all occurences of $p$ in $s$ (if any) which start at any position in $s[0,i\alpha/2-1]$. Note that every such occurrence must end at any position in $s[\alpha/2-1,\alpha-1]$. As we have $\sigma^\alpha$ strings of length $\alpha$ and we can have at most $\alpha/2$ positions of occurrences, the space used by the table which stores all those occurrences for every possible string is thus just $O(\sigma^\alpha\log\alpha)$ bits. 
\section{Conclusion}
In this paper, we have proposed four solutions to the problems of single and multiple pattern matching on strings in the RAM model. In this model we assume that we can read $\Theta(w/\log\sigma)$ consecutive characters of any string in $O(1)$ time. The first and third solutions have a query time which depends on the length of the shortest pattern (for multiple string matching) and the length of the only pattern respectively, in a way similar to that of the previous algorithms which aimed at average-optimal expected performance (not worst-case performance as in our case). Those two solutions achieve optimal query times if the shortest pattern (or the only pattern in the third solution) is sufficiently long. The second and fourth solutions have no dependence on the length of the shortest pattern but need to use additional precomputed space. They are interesting alternatives to the previous tabulation approaches by Bille~\cite{B09} and Fredriksson~\cite{F02}.
\\
This paper gives rise to two interesting open problems: 

\begin{itemize}
\item In order to obtain any speedup we either rely on the length of the shortest pattern being long enough (theorem~\ref{theorem1} and~\ref{theorem3}) or have to use additional precomputed space (theorems~\ref{theorem2} and~\ref{theorem4} ). An important open question is whether it is possible to obtain any speedup without relying on any of the two assumptions.
\item The space usage of both solutions is $\Omega(m\log m)$ bits, but the patterns themselves occupy just $(m\log \sigma)$ bits of space. The space used is thus at least a factor $\Omega(\log_\sigma m)$ larger than the space occupied by the patterns. An interesting open problem is whether it is possible to obtain an acceleration compared to the standard AC automaton while using only $O(m\log\sigma)$ bits of space.
\end{itemize} 
\section*{Acknowledgements}
The author wishes to thank Mathieu Raffinot for his many helpful comments and suggestions on a previous draft and to two anonymous reviewers for their helpful remarks and corrections on a previous version of the paper. 
\small 
\bibliography{Fast_Aho_Corasick} 
\normalsize



\end{document}